\newcommand{\nn}{\nonumber}
\renewcommand{\phi}{\varphi}
\newcommand{\C}{\mathbb{C}}
\newcommand{\CP}{\mathbb{CP}}
\newcommand{\N}{\mathbb{N}}
\newcommand{\Z}{\mathbb{Z}}
\DeclareMathOperator{\GL}{GL}
\providecommand{\smallabs}[1]{\lvert#1\rvert}
\DeclarePairedDelimiter\ceil{\lceil}{\rceil}
\theoremstyle{plain}
\newtheorem{theorem}{Theorem}[section]
\newtheorem{definition}[theorem]{Definition}
\newtheorem{conjecture}[theorem]{Conjecture}
\begin{document}
\title{Decoherence and the Classes of Maximally Entangled States}
\author{Roman V. Buniy}
\email{roman.buniy@gmail.com}
\affiliation{Schmid College of Science, Chapman University, Orange, CA 92866, USA}
\author{Robert P. Feger}
\email{robert.feger@gmail.com}
\affiliation{Deutscher Wetterdienst, Frankfurter Str. 135, 63067 Offenbach am Main, Germany}
\author{Thomas W. Kephart} 
\email{tom.kephart@gmail.com}
\affiliation{Department of Physics and Astronomy, Vanderbilt University, Nashville, TN 37235, USA}
\date{\today}

\begin{abstract}
Self-interactions and interaction with the environment tend to push quantum systems toward states of maximal entanglement.
This is a definition of decoherence.
We argue that these maximally entangled states fall into the well-defined classes that can be uniquely described by the values of certain entanglement invariants.
After discussing these ideas we present examples of maximally entangled states for a number of generic systems, construct compact states in the most entangled classes for tripartite systems, and suggest how they may be constructed for other $n$-partite systems.
We study random walks through the space of entanglement classes to see how decoherence might work in practice.
\end{abstract}

\maketitle

\section{Introduction}
\label{section_introduction}

The physics of entanglement and decoherence \cite{EPR,Zeh,Zurek:1981xq,Zurek:1982ii,Zurek:2003zz,Schlosshauer:2003zy,Joos,Schlosshauer,Bengtsson} is taking on an ever more important role in contemporary physics. 
The success of quantum devices, and in particular of quantum computers \cite{Bruzewicz,Arute,Zhong,Wu,Ebadi}, depends on understanding and control of quantum aspects of these systems.
Assuming we can arrange quantum states of a system with sufficient precision, they are subject to disruptions by all sorts of phenomena---from interactions with their local environment to interactions with cosmic rays and cosmic background radiation.
Such states are typically fragile and difficult to protect from unwanted agents of change.
One can shield quantum states from some, but not all, of these interactions, and they tend to move states toward decoherence, by which we mean that an initial carefully prepared isolated quantum state of a system becomes ever more entangled with itself and its environment.
Without rigorous attention to shielding, decoherence proceeds very quickly to a state of maximal entanglement of the system and the environment. 

One profitable avenue of exploration of the world of entangled states is their taxonomy which proceeds by organizing them in such sets that members of the same set have certain important common features that unite them, while members of distinct sets differ from each other in ways that highlight their physical distinctiveness.
This approach is very similar to the original rank-based scientific classification that was introduced long ago by Carl Linnaeus \cite{Linnaeus} and applied to a biological classification of organisms into kingdoms, classes, orders, genera and species.
Other branches of science have also benefited from taxonomy. 

As a result of numerical experiments with entangled states in a wide variety of systems, we have observed their many properties and collected extensive entanglement data, which leads us to several conjectures, some of which we have proved and some others that perhaps can be proved with more effort.
We suggest that understanding of quantum entanglement can benefit from a detailed classification, i.e., by defining the entanglement taxonomy to organize the overwhelming amount of information that has been and will be discovered about entangled states.
This perspective can also be found in \cite{Bengtsson}.

Classification of entangled states has been studied for over two decades \cite{Dur,Verstraete,Klyachko,Miyake1,Luque1,Miyake2,Briand,Luque2,Toumazet,Wallach,Lamata,Buniy:2010yh,Buniy:2010zp} and has its mathematical foundations in linear algebra \cite{Gelfand} and classical invariant theory \cite{Olver}.
In view of the large variety of entangled states (even the full description of all types of which is presently beyond our reach), we will focus here on the study of the maximally entangled states and attempt to find universal properties that describe them.

Since quantum states transform into each other by coordinate transformations, to describe physically distinct states, we need to group states together into sets in such a way that states from the same set can be transformed into each other, but there are no transformations that relate states from different sets.
This leads to all quantum states falling into sets of equivalence classes, where the equivalence relation is induced by transformation of bases of vector spaces.
It turns out that this results in the infinite number of equivalence classes.
However, if one further relates these classes by certain criteria and partition them into sets of equivalence classes (or, equivalently, forms equivalence classes of equivalence classes), then this leads to a finite set of equivalence classes. 

Here is a slightly more detailed description of the formation of the above equivalence classes and the resulting entanglement classification which is based on algebraic invariants that originate in linear algebra \cite{Buniy:2010yh,Buniy:2010zp}.
We consider an $n$-partite system $S$ consisting of $n$ subsystems $\{S_i\}_{i=1}^n$.
(We often use the shorter term $n$-partite even for the cases where the longer terms are available: bipartite ($n=2$), tripartite ($n=3$), quadripartite ($n=4$), quintipartite ($n=5$), hexapartite ($n=6$), septempartite ($n=7$), octapartite ($n=8$), etc.)
Each subsystem $S_i$ is described by a vector space $V_i$ and the corresponding space for the full system $S$ is the tensor product $V=\otimes_{i=1}^n V_i$.
Each $V_i$ is invariant under the general linear group $\GL(V_i)$ and consequently $V$ is invariant under the direct product group $G=\times_{i=1}^n \GL(V_i)$.
By considering all vectors into which a given vector $v\in V$ is transformed by $G$, we arrive at the equivalent class $C(v)$ of $v$.
Proceeding with all $v\in V$, we partition $V$ into the set of equivalence classes $\mathcal{C}$.
To define invariants, we partition the system $S$ into two subsystems $T$ and $T'$, which leads to the corresponding product $V=W\otimes W'$, and consider linear maps between $W$ and $W'$.
Each of these maps are fully characterized by the dimensions of their kernels and images, and since the maps depend on a given $v\in V$, these dimensions become algebraic invariants associated with each state $v$. 
Finally, to get the complete information about any given state $v\in V$, we consider all possible intersections between subspaces of $V$ and compute the invariants of the associated linear maps.
This leads to the complete set of algebraic invariants for every $v\in V$.
By this construction, the values of any algebraic invariant is the same for every vector in the equivalence class $C(v)$ of a vector $v$.
For a full discussion see \cite{Buniy:2010yh,Buniy:2010zp}.

The plan of the present paper is as follows.
In Section \ref{section_fragility_of_quantum_states_and_decoherence} we give a definition of maximally entangled states and elaborate on its meaning, in particular because other definitions for similar concepts exist in the literature.
This naturally leads to the definition of maximally entangled classes of states. 
Among various entanglement invariants, we define the principal entanglement invariant which encodes the amount of entanglement of the highest degree available for a given system and plays the crucial role in the distinction between maximally and non-maximally entangled states and classes.
Section \ref{section_examples_of_maximally_entangled_states} contains numerous examples of maximally entangled states and classes.
Here we consider hypercubic and hypercuboid systems separately; these are systems of qudits that have, respectively, the same and arbitrary dimensions.
This separation allows us to look at maximally entangled states of hypercubic systems with more analytical details.
Additionally, numerical data on entanglement for the simplest hypercubic systems of $n$ qubits is easier to study than any other $n$-partite systems, and the difference between these two types of systems increases exponentially with $n$.
We use the numerical data from Sections \ref{section_hypercubic_systems} and \ref{section_hypercuboid_systems} to formulate several conjectures in Section \ref{section_general_systems} about the principal entanglement invariant and the shortest lengths of maximally entangled states.
Then, to get an idea of the rate of decoherence, we study random walks through the space of entanglement classes in Section \ref{section_a_random_walk_through_the_space_of_entanglement_classes}.
The basic idea here is to allow evolution of states in the form of random changes of their coordinates (which might be caused by generic unspecified interactions within the system or between the system and the environment) and observe the resulting transition of states between different entanglement classes of the system.
This provides a view of the ``entanglement landscape.''
For example, we learn the measures of connectedness of different classes, i.e., how easy it is for a state to move from one class to another in the sense of the expected value for the number of steps in the above random walks that it takes to do so.
We end with a discussion and conclusions in Section \ref{section_discussion_and_conclusions}, where we mention some important and clear examples that can be used to study entanglement and subsequent decoherence.

\section{Fragility of Quantum States and Decoherence}
\label{section_fragility_of_quantum_states_and_decoherence}

In general, the details of what constitutes maximally entangled states (MESs) are not known because they are quite complex, but in small systems (or small systems plus environments) there are many cases where they can be precisely described, and that is what we plan to discuss in this paper.
We argue that all the MESs of any given system form the class of states which is uniquely defined by values of their algebraic entanglement invariants.
Numerous examples of such states for various multipartite systems of various dimensions will be provided.
We will work with generic systems and will not need nor discuss details of interactions, time scales, partial traces, reduced density matrices and other concepts that describe the evolution toward decoherence since all these topics can be found in the literature. 
(For a review see \cite{Schlosshauer} and references therein, where they are adequately discussed and the original references are given.)

\subsection{Invariants, Maximally Entangled States and Classes}
\label{section_invariants_maximally_entangled_states_and_classes}

For a system of $n$ qudits with Hilbert spaces $\{V_i\}_{i=1}^n$, the Hilbert space of the combined system is the tensor product space $V=\otimes_{i=1}^n V_i$.
Let $d_i=\dim{V_i}$ for $1\le i\le n$, so that $\dim{V}=\prod_{i=1}^n d_i$.
Since $\dim{V}\ge d_{\textrm{min}}^n$, where $d_{\textrm{min}}=\min{(d_1,\dotsc,d_n)}$, it follows that $\dim{V}\ge 2^n$ for $d_{\textrm{min}}\ge 2$, and this exponential dependence of $\dim{V}$ on $n$ is responsible for difficulties in studying $n$-partite systems with large $n$.

We choose a basis $E_i=\{e_{i,j}\}_{j=1}^{d_i}$ of $V_i$ for each $i$ and use the resulting tensor product basis $E=\otimes_{i=1}^n E_i$ of $V$ to write an arbitrary element $v\in V$ in terms of its coordinates $\{v_{j_1,\dotsc,j_n}\}$ according to
\begin{align}
  v &=\sum_{j_1=1}^{d_1} \dotsb \sum_{j_n=1}^{d_n} v_{j_1,\dotsc,j_n} e_{1,j_1}\otimes \dotsb \otimes e_{n,j_n}.
  \label{v}
\end{align}
For a general state $v\in V$, the number of terms in \eqref{v} equals $\dim{V}$.
However, states look different in different bases, which implies that, in particular, the entanglement of a state cannot be deduced from only the number of terms in its expansion \eqref{v}.
For example, it may happen that there exists a choice of the basis $E'=\otimes_{i=1}^n E'_i$ with $E'_i=\{e'_{i,j_i}\}_{j_i=1}^{d_i}$, where $e'_{i,1}=\sum_{j_i=1}^{d_i} c_{i,j_i}e_{i,j_i}$ and $\{c_{i,j_i}\}$ are  complex numbers satisfying $\sum_{j_i=1}^{d_i}\smallabs{c_{i,j_i}}^2 =1$ for $1\le i\le n$, such that $v$ in \eqref{v} can be written in the factorized form with only one term, for example, $v=e'_{1,1}\otimes \dotsb \otimes e'_{n,1}$.
The existence of such a factorized form is the necessary and sufficient condition for $v$ to be an unentangled state.
All other states (those that cannot be brought to a factorized form by any choice of the basis) are entangled.

Although the set of all unentangled states and the set of all entangled sets are both infinite, it is intuitively clear that, in any non-trivial system, the set of entangled states is significantly larger then the set of unentangled states.
It is then useful to partition the set of all entangled states into subsets based on some characteristics of degree of entanglement of states.
The simplest of such characteristics is the partiteness, which approximately corresponds to the number of subsystems that are inextricably joined, and such states are $k$-partite entangled, where $2\le k\le n$.
This characterization can be made precise by using the algebraic entanglement invariants, which are determined by different types of partitioning of a system into subsystems.
Representing an $n$-partite system by the set $\{1,\dotsc,n\}$ and its subsystems by the corresponding subsets of this set, we need to consider all partitions of $\{1,\dotsc,n\}$ into $m$ subsets, where $2\le m\le n$. 
This leads to a finite set of invariants $\mathcal{N}_D$ for any $n$-partite system with the dimensions $D=(d_1,\dots,d_n)$.

An important feature of $\mathcal{N}_D$ is that the number of invariants $\smallabs{\mathcal{N}_D}=\nu_{\smallabs{D}}$ depends only on $\smallabs{D}=n$.
With this property, computations of entanglement invariants is simplified in the sense that one needs to compute analytic forms of the invariants only once for a set of any $n$-partite systems.
After these forms are established, the difference between any two $n$-partite systems with different dimensions is due to the values that the invariants take.

The number $\nu_n$ rapidly increases with $n$; see Table \ref{table_numbers_of_invariants} for $2\le n\le 7$.
\begin{table}[htpb]
  \centering
  \addtolength{\tabcolsep}{2pt}    
  \begin{tabular}{ccccccc}
    \toprule
    $n$ & $2$ & $3$ & $4$ & $5$ & $6$ & $7$ \\
    \midrule
    $\nu_n$ & $1$ & $4$ & $19$ & $167$ & $11,747$ & $12,160,646$ \\
    \bottomrule
  \end{tabular}
  \addtolength{\tabcolsep}{-2pt}    
  \caption{The numbers of invariants $\smallabs{\mathcal{N}_D}=\nu_{\smallabs{D}}$, where $\smallabs{D}=n$, for any $n$-partite system for $2\le n\le 7$.}
  \label{table_numbers_of_invariants}
\end{table}
(Note that $\nu_n$ is somewhat unexpectedly related to a sequence in \cite{OEIS}, which is the number of maximal intersecting anti-chains of subsets of $\{1,\dotsc,n\}$.)
The numbers $\nu_n$ for $n\ge 8$ are currently unknown.
The full sets of $3$- and $4$-partite mappings can be found in \cite{BKx}, and the full sets of $5$- and $6$-partite mappings have yet to be presented.

\begin{table}[htpb]
  \centering
  \addtolength{\tabcolsep}{2pt}    
  \begin{tabular}{cccccc}
    \toprule
    $n$ & $2$ & $3$ & $4$ & & $5$ \\
    \midrule
    $\smallabs{\mathcal{C}_{(2,\dotsc,2)}}$ & $3$ & $7$ & $83$ & & $\gtrsim 2\times 10^5$ \\
    \bottomrule
  \end{tabular}
  \addtolength{\tabcolsep}{-2pt}    
  \caption{The numbers of entanglement classes $\smallabs{\mathcal{C}_{(2,\dotsc,2)}}$ for $n$ qubits, where $2\le n\le 5$.}
  \label{table_numbers_of_entanglement_classes}
\end{table}
 
The number of all classes $\smallabs{\mathcal{C}_D}$ of an $n$-partite system with the dimensions $D=(d_1,\dots,d_n)$, which is determined by the allowed values of the full set of invariants $\mathcal{N}_D$, grows rapidly with the number of subsystems $n$ and with the dimensions $D$.
We list several exact values and an approximate lower bound of $\smallabs{\mathcal{C}_D}$ for $n$ qubits in Table \ref{table_numbers_of_entanglement_classes}, and note that no useful results for $n\ge 6$ qubits are known.

Among all the $\smallabs{\mathcal{C}_D}$ classes of a given system with the dimensions $D$, of particular interest is the maximally entangled class.
This class consists of maximally entangled states and every such state has definite mathematical properties distinguishing it from any state outside of the maximally entangled class.
Physical properties of such states mirror their mathematical properties.
We define maximally entangled states for arbitrary dimensions $D$ in such a way that for systems with small partiteness and dimensionality (for example, for $2$ and $3$ qubits) our definition gives the well-known maximally entangled states, which are extensively studied in the literature.

\begin{definition}
A maximally entangled state (MES) of an $n$-partite system with the dimensions $D=(d_1,\dotsc,d_n)$ is any state $v$ for which every invariant in the set $\mathcal{N}_D(v)$ takes its least possible value.
The set of all maximally entangled states is the maximally entangled class (MEC) $C_D$.
\label{definition_maximally_entangled_states_and_classes}
\end{definition}

We make the following comment about the existence of MESs and MECs.
Since every invariant in $\mathcal{N}_D$ is a non-negative integer, there always exist states for which any given invariant in the set $\mathcal{N}_D$ is minimized.
The minimization used in Definition \ref{definition_maximally_entangled_states_and_classes}, however, requires that every invariant in $\mathcal{N}_D$ takes its minimum value.
It might happen that for some $D$ such a global minimization does not result in any state of the system.
We have found MESs and MECs in all numerous examples that we studied, but of course, such numerical studies are not substitutions for the proof of existence of MESs and MECs for arbitrary $D$. 
As a result, we are content with the situation with regards to Definition \ref{definition_maximally_entangled_states_and_classes} that it is possible that for some $D$ maximally entangled states do not exists and the resulting maximally entangled class is empty, but we consider this possibility highly unlikely.

As an example, there are $7$ entanglement classes for $3$ qubits and as we will see, there exists a class (called $C_6$ below) that we can identify with the MEC, for which all the invariants take their least (zero) values.

\subsection{Typical States}
\label{section_typical_states}

We will now define typical states of any system in such a way that they behave in the following manner.
If the system is left to interact with itself and with the environment, then, after a sufficiently long time, the state of the system inevitably and arbitrarily closely approaches one such typical state.
To keep the definition as general as possible, we do not specify the exact relation between the time and proximity of this approach because these parameters depend on the specific details of the system (its composition, dimensions, forms and strengths of interactions within the system and between the system and the environment, etc.).
We replace precise details of the system and environment with the assumption that, for any sufficiently large system and environment with non-trivial interactions, the complexity of the time evolution of a state of the system is such that we might as well assume that it undergoes the process of randomization that equally affects all parts of the state.
This leads us to the following definition.

\begin{definition}
A typical state $v$ of an $n$-partite system with the dimensions $D=(d_1,\dotsc,d_n)$ is a state in which the coordinates $\{v_{j_1,\dotsc,j_n}\}$ in \eqref{v} are taken as independent random variables uniformly distributed over any finite region of $\C^{\dim{V}}$ of dimension $\dim{V}$.
\label{definition_typical_states}
\end{definition}

It is now widely agreed that decoherence is due to uncontrolled  entanglement of a system with its environment to the point where only ``classical'' information can be extracted from the system.
Any original entanglement of the system becomes obscured by the overwhelming amount of entanglement with its environment.
The system evolves quantum-mechanically with no need to impose the collapse of its wave function.
The end point of this evolution is one of infinitely many completely randomized states, i.e., a state where $\{v_{j_1,\dotsc,j_n}\}$ is a set of independent random variables in $\C^{\dim{V}}$.
This is precisely a typical state; see Definition \ref{definition_typical_states}.
For our purposes it is sufficient to assume that details of the initial state of the system do not matter and that the interaction with the environment forces the system to eventually visit every (no matter how small) region of the combined Hilbert space with nonzero probability, which is a type of quantum ergodicity.
We can go further and require that the corresponding probability distribution function is uniform, but for this we need a finite region in $\C^{\dim{V}}$ as its support.
The nature of the algebraic entanglement invariants is such that the choice of this finite region is arbitrary as long as its dimension equals $\dim{V}$.
In view of the equivalence of states $v$ and $cv$ for any $v\in V$ and any nonzero $c\in\C$, it is natural to choose the projective space $\CP^{\dim{V}}$ for this region.
In practice, however, this is not necessary and any other finite region will work for the purpose of computing entanglement invariants for completely randomized states.

The connection between typical states and maximally entangled states is given by the following conjecture.

\begin{conjecture}
Any typical state is almost surely an MES.
Equivalently, for any typical state $v$ and any MES $\tilde{v}$ of the same system, the equality $\mathcal{N}_D(v)=\mathcal{N}_D(\tilde{v})$ holds with probability $1$. 
\label{conjecture_typical_states_are_MESs}
\end{conjecture}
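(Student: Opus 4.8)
\emph{A route towards a proof.} The plan is to argue geometrically. For each fixed invariant $N\in\mathcal{N}_D$ one should show that the set of states $v\in V$ on which $N$ attains its minimum over $V$ is a nonempty Zariski-open subset of $V\cong\C^{\dim V}$, hence the complement of a proper algebraic subvariety and so of full Lebesgue measure; intersecting over the finitely many invariants in $\mathcal{N}_D$ then exhibits the maximally entangled class $C_D$ itself as the complement of a proper subvariety, so a typical state --- being uniformly distributed on a finite region of $\C^{\dim V}$ of dimension $\dim V$ --- lies in $C_D$ with probability $1$.

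In more detail, the steps would be as follows. (i) Using the construction of $\mathcal{N}_D$ from \cite{Buniy:2010yh,Buniy:2010zp}, write each $N\in\mathcal{N}_D$ as the dimension of a subspace built from $v$: for a partition of $\{1,\dotsc,n\}$ into two blocks one factorizes $V=W\otimes W'$ and regards $v$ as a linear map $M(v)$ whose matrix entries are linear in the coordinates $\{v_{j_1,\dotsc,j_n}\}$; the basic invariants are the nullity and corank of $M(v)$, and the refined invariants are dimensions of kernels, images, and their mutual intersections taken across finer partitions. (ii) Observe that each such $N$ is Zariski upper semicontinuous in $v$, that is, the sublevel set $\{v:N(v)\le k\}$ is Zariski-open, since it is the locus where the rank of a suitable matrix with polynomial entries in $v$ is at least a prescribed value. (iii) Because $N$ takes values in $\{0,1,2,\dotsc\}$, the minimum $N_{\min}=\min_{v\in V}N(v)$ is attained and $U_N=\{v:N(v)=N_{\min}\}=\{v:N(v)\le N_{\min}\}$ is a nonempty Zariski-open subset of the irreducible variety $V$, hence Zariski-dense with complement of measure zero. (iv) Since $\mathcal{N}_D$ is finite (Table~\ref{table_numbers_of_invariants}), the intersection $U=\bigcap_{N\in\mathcal{N}_D}U_N$ is again Zariski-open, dense, and of full measure, and by Definition~\ref{definition_maximally_entangled_states_and_classes} it is exactly $C_D$. (v) A typical state $v$ (Definition~\ref{definition_typical_states}) has law absolutely continuous with respect to Lebesgue measure on a region of dimension $\dim V$, so $\Pr[v\notin C_D]=0$ and $v$ is almost surely an MES. (vi) The two formulations are the same event: if $v$ is an MES then $\mathcal{N}_D(v)$ equals the tuple $(N_{\min})_{N\in\mathcal{N}_D}$, which coincides with $\mathcal{N}_D(\tilde v)$ for every MES $\tilde v$, and conversely $\mathcal{N}_D(v)=\mathcal{N}_D(\tilde v)$ with $\tilde v$ an MES forces $v$ to be an MES.

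The step I expect to be the main obstacle is establishing (ii)--(iii) for the \emph{refined} invariants. For the nullity and corank invariants, as well as for intersections of kernels, both semicontinuity and generic minimality are classical, since the relevant quantity is a constant minus the rank of a matrix linear in $v$. But a refined invariant can involve a dimension such as $\dim\!\big(\operatorname{im}M_1(v)\cap\operatorname{im}M_2(v)\big)$, whose generic value is governed by $\operatorname{rank}M_1(v)+\operatorname{rank}M_2(v)-\operatorname{rank}[M_1(v)\,|\,M_2(v)]$, in which maximizing the first two ranks and maximizing the third push the intersection dimension in opposite directions; one must check that the construction of $\mathcal{N}_D$ in \cite{Buniy:2010yh,Buniy:2010zp} always packages these invariants so that their generic (Zariski-open) value is in fact the global minimum, which is a semicontinuity-of-fiber-dimension statement that has to be matched carefully to that construction. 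I note finally that if this step goes through, then step (iv) also shows $C_D\neq\varnothing$ for every $D$, settling affirmatively the existence question raised after Definition~\ref{definition_maximally_entangled_states_and_classes}; this indicates that essentially all of the difficulty is concentrated in understanding the generic behaviour of the refined invariants.
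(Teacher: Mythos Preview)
The paper does not prove this statement: it is stated explicitly as a conjecture and is supported only by the numerical experiments described in Sections~\ref{section_examples_of_maximally_entangled_states} and~\ref{section_a_random_walk_through_the_space_of_entanglement_classes}. There is therefore no proof in the paper to compare your proposal against; your outline already goes well beyond what the authors attempt.

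Your strategy is the natural one and is essentially sound. A finite intersection of nonempty Zariski-open subsets of the irreducible variety $V\cong\C^{\dim V}$ is nonempty, dense, and of full Lebesgue measure, so steps (iii)--(vi) are routine once (ii) is in place. One remark on the obstacle you flag: at least for the tripartite invariants that the paper actually computes (see the proof of Theorem~\ref{theorem_hypercubic_systems_n_3} in Appendix~\ref{section_a_theorem_for_3_partite_hypercubic_systems}), each invariant $n_i$ and $n_{1,2,3}$ is the dimension of a solution space $K$ of a homogeneous linear system in auxiliary variables $w$ whose coefficients are \emph{linear} in the coordinates of $v$. In other words, each invariant is the nullity of a single matrix $M(v)$ with entries polynomial in $v$, and the ``refined'' invariants arise by stacking such systems, i.e.\ as $\dim\bigl(\ker M_1(v)\cap\ker M_2(v)\bigr)=\dim\ker\bigl[\begin{smallmatrix}M_1(v)\\ M_2(v)\end{smallmatrix}\bigr]$, which is again a nullity. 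For invariants of this form, upper semicontinuity and generic minimality are immediate from lower semicontinuity of rank, and the problematic image-intersection quantity you worry about does not seem to enter. The genuine work that remains is to verify, from the general construction in \cite{Buniy:2010yh,Buniy:2010zp}, that \emph{all} invariants in $\mathcal{N}_D$ for every $n$ are of this kernel type; if so, your argument closes and, as you note, simultaneously settles the nonemptiness of $C_D$ that the authors leave open after Definition~\ref{definition_maximally_entangled_states_and_classes}.
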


The classes of states of a system are each associated with  specific sets of mappings of one set of subsystems into another, and the mappings are connected with the degree of entanglement.
Roughly speaking, the simplest maps correspond to various types of bipartite entanglement, more complicated maps to tripartite entanglement, and so forth, all within the same system.
The most complicated maps of an $n$-component systems correspond to $n$-partite entanglement, and hence we expect an MES to lie in one of these classes.

We have found that the randomized states of a system of $n$-components tend to fall into a particular $n$-partite class.
This class has the highest level of $n$-partite mappings.
Relative to this class, all other classes form a set of measure zero. Furthermore, as we know from investigating many examples, this MEC has the minimum allowed set of values for its invariants.
There is one invariant, which we call the principal entanglement invariant, associated with the highest level of $n$-partite mappings which always takes on its minimum value for each MES.
In addition, all other subsidiary invariant values also take on their minimum values for each MES.
Hence in this sense, each MES is the ground state of the $n$-partite system.
(We note that when the subsidiary invariants are at their minimum values, the principal invariant can still be above its minimum value.
This occurs for highly entangled states that are still not a part of the MEC.)

Consider a system under which we have complete control to place it in any state we wish, meaning that we can choose all coefficients in the state $v$ in \eqref{v}.
We can then find the class to which this state belongs by calculating all its invariants.
Conversely, if we are able to measure all coefficients of an arbitrary state $v$, then we can use that information to calculate its invariants and hence determine the class to which it belongs.
In practice, neither the construction of the state nor the measurement of the coefficients are accomplished easily, but given the coefficients we can always find the class. 

Let us next assume that our system is in some state for which we have obtained its coefficients in some way.
If it is a random state, then the coefficients $\{v_{j_1,\dotsc,j_n}\}$ are a set of random complex numbers.
Now taking an ensemble of such random states, we are interested in finding the probability that the system is in a given entanglement class.
I.e., we can ask what is the most likely class in which the system finds itself.
This invariably turns out to be the MEC as defined above.
According to the hierarchy of states from most to least entangled, states of lesser and lesser entanglement are more and more unlikely to be found in an ensemble of random states.

As we go from the MEC to the least entangled class, i.e., the class to which the unentangled factorizable states belong, the invariant values increase, although in a complicated way.
However, in the MEC they always have their minimum value, while in the completely unentangled class they take on their maximum values.
(We are not considering the vacuum case here.)

Starting with a system in any of its least entangled states as an initial condition,  random interactions between parts of the system and between the system and the environment will quickly cause the system to decohere into an MES.
As mentioned above, the rate at which this happens has already been thoroughly discussed and estimated elsewhere (for a review and references see \cite{Schlosshauer}), but the MESs themselves have not been explored as thoroughly.
That is what we undertake here by way of formal analytic results where possible, but supported and supplemented by numerous examples.

To begin our study of decoherence towards maximally entangled states, we first list the most important mappings (for details see \cite{Buniy:2010zp}) that we will find to be associated with the most entangled classes for several types of systems.
These are related to the principal entanglement invariants defined below.

\subsection{Principal Entanglement Invariants}
\label{section_principal_entanglement_invariants}

We now return to the discussion in Section \ref{section_invariants_maximally_entangled_states_and_classes}, where we represented subsystems of an $n$-partite system by the set $\{1,\dotsc,n\}$ and considered all partitions of this set into $m$ subsystems, where $2\le m\le n$, which eventually led to a finite set of invariants $\mathcal{N}_n$ for any $n$-partite system.
Of particular interest to us here are the types of partitions that lead to the $n$-partite entanglement, which is the highest order of entanglement available to any $n$-partite system.
Among these, the subsets that generate the principal entanglement invariant are given by 
\begin{align}
    X_n &=\{J_1,\dotsc,J_n\}, 
\end{align}    
where
\begin{align}
    J_i &=\{1,\dotsc,i-1,i+1,\dotsc,n\}.
\end{align}
We can represent the subset $J_i$ as the $(n-2)$-dimensional simplex that is dual to the vertex $\{i\}$ in the $(n-1)$-dimensional simplex representing the set $\{1,\dotsc,n\}$; see Figure \ref{figure_simplices_4} for the case $n=4$.
Although this representation is very simple, it is nevertheless useful because there are similar representations of the  entanglement invariants of lower degrees by simplices of lower dimensions.

\begin{figure}[htpb]
\includegraphics[width=0.75\textwidth]{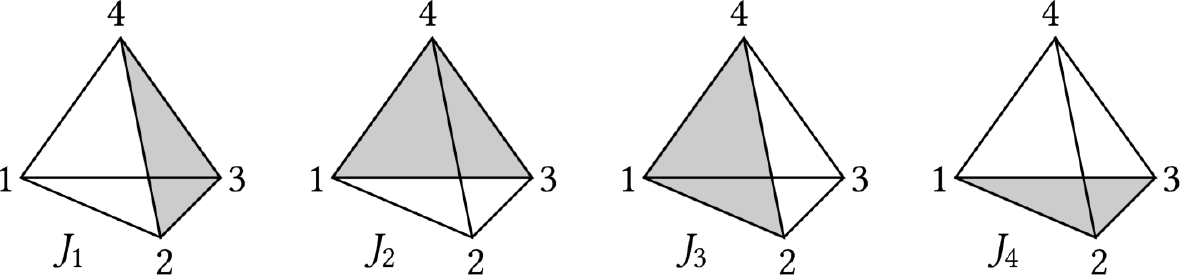}
\caption{Simplex representations of the $4$ subsystems $J_1=\{2,3,4\}$, $J_2=\{1,3,4\}$, $J_3=\{1,2,4\}$, $J_4=\{1,2,3\}$ of the $4$-partite system $\{1,2,3,4\}$ that generate the principal entanglement invariant.}
\label{figure_simplices_4}
\end{figure}

For an $n$-partite system with the dimensions $D=(d_1,\dotsc,d_n)$, we denote the principal entanglement invariant by $N_D$ and note that it is one of the invariants in the set $\mathcal{N}_D$.
This invariant can take values in the range $0\le N_D\le \dim{V}$, but we note that not all of these values can be realized for a given $D$.
When the degree of the $n$-partite entanglement increases, the corresponding value of $N_D$ decreases.
As a result, a maximally $n$-partite entangled state (and, more generally, a maximally entangled state) will have the least possible value of $N_D$.
This leads us to Definition \ref{definition_maximally_entangled_states_and_classes}.
See \cite{Buniy:2010yh,Buniy:2010zp} for a comprehensive discussion.

For a general $n$-partite state, however, the condition that all $n$ subsystems are entangled is only a necessary, but not a sufficient condition for the state to be maximally entangled.
To verify that a given state $v\in V$ is an MES, one needs to compute the set of all entanglement invariants $\mathcal{N}_D(v)$ for this state as well as the set of all entanglement invariants $\mathcal{N}_D(\tilde{v})$ for a state $\tilde{v}\in V$ for which all coordinates $\{\tilde{v}_{j_1,\dotsc,j_n}\}$ are independent random variables in $\C^{\dim{V}}$.  
Since any such $\tilde{v}$ is (extremely likely to be) an MES, we conclude that $v$ is also an MES if and only if $\mathcal{N}_D(v)=\mathcal{N}_D(\tilde{v})$.

Our numerous examples in the following sections will show that the principal invariant of states in each MEC takes its minimum values.
Monte Carlo simulations in which states evolve in such a way that the principal invariant for them reaches the smallest allowed value for each set of dimensions $D$, will show us one aspect of how decoherence works in practice.

\section{Examples of Maximally Entangled States}
\label{section_examples_of_maximally_entangled_states}

For computational reasons, we consider in this section various $n$-partite systems only with $3\le n\le 6$ and various dimensions $D\,{=}\,(d_1,\dotsc,d_n)$.
In view of the large number of invariant features of these systems that characterize their entanglement properties, we restrict ourselves to only  a few invariant characteristics associated with every such system.

The first such characteristic for a given $n$-partite system with the dimensions $D$ is the smallest value of the principal entanglement invariant $N_D$.
This invariant characterizes the purely $n$-partite entanglement properties of the system in the sense that the $k$-partite entanglement properties for all $2\le k\le n-1$ are already described by other algebraic entanglement invariants from the set $\mathcal{N}_D$.
As a result, the states with the smallest values of $N_D$ are the $n$-partite entangled to the fullest possible extent.
These are MESs and they belong to an MEC $C_D$, as described above.

With each state $v$ we associate the number $L(v)$ (which we call the length of $v$) that equals the number of nonzero coordinates $\{v_{j_1,j_2,\dotsc,j_n}\}$ in the expansion \eqref{v} of $v$.
The quantity $L(v)$ changes under a general change of the basis of $v$, and therefore $L(v)$ cannot be an invariant property of $v$.
However, for a given class $C$, we can define the numbers
\begin{align}
&L_{\textrm{min}}(C) =\min{\{L(v)\colon v\in C\}}, \label{l_min} \\
&L_{\textrm{max}}(C) =\max{\{L(v)\colon v\in C\}}, \label{l_max}
\end{align}
which do not depend on a choice of the basis for $v$.
We note that $L_{\textrm{max}}(C)=d_1\dotsb d_n$ for any $C$ different from $C=\{0\}$ since we can always choose basis transformations which increase the number of nonzero coordinates, unless the number has already reached its largest possible value, which is $\dim{V}=d_1 \dotsb d_n$.
The situation with the number $L_{\textrm{min}}$ is very different since the smaller the number $L(v)$, the harder it is to find basis transformations which will decrease the number of nonzero coefficients in the expansion of $v$ while keeping $v$ in the same equivalence class $C$.
This naturally leads to a nontrivial value of $L_{\textrm{min}}(C)$.
For a given set of dimensions $D$, we have a unique MEC $C_D$ and with it the smallest number of nonzero coefficients, which we denote simply $L_D=L_{\textrm{min}}(C_D)$. 
It is clear that any $v$ satisfying $L(v)<L_D$ cannot belong to $C_D$ for the given $D$.
This number $L_D$ is the second (after $N_D$) characteristic of every MEC of interest to us in this study.

The numbers $L_{\textrm{min}}(C)$, where $C$ is not an MEC, will make an indirect appearance in our studies of random walks through the space of entanglement classes in Section \ref{section_a_random_walk_through_the_space_of_entanglement_classes}.
In such random walks, states of a system undergo transformations which might change the number of nonzero coefficients in the expansion \eqref{v}, which might change the entanglement class to which an evolving state belongs.
The relevant generalization of $L_D$ is then $L_{\textrm{min}}(C)$ in \eqref{l_min}, where $C$ is not restricted to be an MEC (as it is for $L_D$). 

\subsection{Hypercubic Systems}
\label{section_hypercubic_systems}

The problem of entanglement classification of a given system involves finding all its entanglement classes.
It can be solved by computing algebraic invariants of each state of the system and assigning a class to which a state belongs based on the values of its invariants. 
An $n$-partite system with the dimensions $D=(d_1,\dotsc,d_n)$, where $n$ is small and the dimensions $d_1,\dotsc,d_n$ of all subsystems are small, has relatively few entanglement classes.
The problem of entanglement classification of such systems is relatively easy; see the results in \cite{Buniy:2010zp,BKx}.

We are interested here in finding short symmetric states belonging to given MECs for general sets of dimensions $D$.
By symmetric states we mean states that have certain symmetries of their coordinates in a given basis, and we prefer states with simple symmetries.
By short states we mean states whose lengths are  close to the smallest length $L_D$ for a given set of dimensions $D$, and therefore less than $dim(V)$.
The goal here is not to find states with the smallest length $L_D$ because such states do not always have easily identifiable simple symmetries.
Such short MESs are a useful tool for comparing states of different degrees of entanglement, for example, states obtained during random walks through the space of entanglement classes as described below in Section \ref{section_a_random_walk_through_the_space_of_entanglement_classes}.

\begin{table}[htpb]
  \centering
  \addtolength{\tabcolsep}{2pt}    
  \begin{tabular}{ccc}
    \toprule
    $D$ & $\mathcal{N}_D$ & $N_D$ \\\midrule
    $(d,d,d)$ & $(0^3,N_D)$ & $\max{(0,d^3-3d^2+2)}$ \\
    $(d,d,d,d)$ & $(0^{18},N_D)$ & $d^4-4d^2+3$ \\
    $(2,2,2,2,2)$ & $(0^{110},7^{10},0^{30},N_D,0^{16})$ & $2^5-5\times 2^2+4=16$ \\
    $(d,d,d,d,d)$ & $(\dotsc,N_D,\dotsc)$ & $d^5-5d^2+4$ \\
    $(2,2,2,2,2,2)$ & $(0^{1792},36^{15},0^{1740},18^{20},0^{960},N_D,0^{7219})$ & $2^6-6\times 2^2+5=45$ \\
    \bottomrule
  \end{tabular}
  \addtolength{\tabcolsep}{-2pt}    
  \caption{The invariants $\mathcal{N}_D$ and the principal invariants $N_D$ for the MECs of hypercubic systems with the dimensions $D$.
  The notation $a^m$ in $\mathcal{N}_D$ means that the invariant value $a$ is repeated $m$ times.
  The numbers of invariants $\smallabs{\mathcal{N}_D}$ agree with Table \ref{table_numbers_of_invariants}.
  For example, there are $11,747$ invariants for six qubits.}
  \label{table_invariants_for_hypercubic_systems}
\end{table}

To simplify the analysis, we start with the dependence on only one dimension among $n$ dimensions in $D$ and temporarily restrict our interest to a subsets of such systems, the hypercubic systems, by which we mean $n$-partite systems of $n$ $d$-qudits, so that $D=(d,\dotsc,d)$.
Table \ref{table_invariants_for_hypercubic_systems} lists our results for the MECs of hypercubic systems.
We obtained these results for various ranges of dimensions $2\le d\le d_\text{max}$ for different $n$, where (for computational reasons) the upper bound of the dimensions $d_\text{max}$ decreases when $n$ increases.
We see that all invariants in $\mathcal{N}_D$ except for the principal invariant $N_D$ are zero only for $2\le n\le 4$, but the principal invariant still follows the pattern of the previous cases even for $n\ge 5$ where there are other nonzero invariants besides $N_D$.
Note that all the values for $N_D$ for varying $d$ in Table \ref{table_invariants_for_hypercubic_systems} vanish for $d=1$, which is consistent with the corresponding system $D=(1,\dotsc,1)$ being a singleton for each $n$.
These results lead us to the following conjecture.

\begin{conjecture}
  \label{conjecture_hypercubic_systems}
  Every state in the MEC for an $n$-partite system with the dimensions $D=(d,\dotsc,d)$, where $d\ge 2$, has the following values of the $2$-partite entanglement invariants $n_{D,i}$, $1\le i\le n$ and the principal entanglement invariant $N_D$:
  \begin{align}
    n_{D,i}=0, \ 1\le i\le n, \ N_D= \max{(0,d^n-nd^2+n-1)}, \ d\ge 2.
    \label{}
  \end{align}
\end{conjecture}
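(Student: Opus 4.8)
\emph{Strategy.} In the kernel‑of‑a‑map formulation of \cite{Buniy:2010zp}, every invariant in $\mathcal{N}_D$ is the dimension of the kernel of a linear map whose entries are linear in the coordinates of $v$ (for an invariant built from an intersection of subspaces, the kernel of the stacked map), hence an upper‑semicontinuous integer‑valued function of $v$; its minimum is therefore attained on a Zariski‑open dense subset of $V$. Since $\smallabs{\mathcal{N}_D}=\nu_n$ is finite, the common minimizing locus is again dense, so a generic $v\in V$ simultaneously minimizes every invariant, i.e.\ lies in the MEC (this en route also proves Conjecture \ref{conjecture_typical_states_are_MESs}). As all states of one class share the same invariant values, it suffices to evaluate $n_{D,i}$ and $N_D$ on a single generic $v$.

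\emph{The invariants $n_{D,i}$.} Here $n_{D,i}(v)=\dim V_i-\operatorname{rank}v_{\{i\}}$, where $v_{\{i\}}\colon V_i^*\to\otimes_{j\ne i}V_j$ is the contraction map of the bipartition $\{i\}\mid J_i$. Since $\dim V_i^*=d\le d^{n-1}=\dim\otimes_{j\ne i}V_j$ for $n\ge 2$, a generic such map is injective, so $n_{D,i}(v)=0$ for generic $v$ and hence on the MEC. (Already the GHZ‑type state $\sum_{k=1}^{d}e_{1,k}\otimes\dotsb\otimes e_{n,k}$ makes every $v_{\{i\}}$ injective.)

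\emph{Lower bound for $N_D$.} Recall from \cite{Buniy:2010zp} that the antichain $X_n=\{J_1,\dotsc,J_n\}$ produces the invariant $N_D(v)=\dim V-\operatorname{rank}M^v$, where
\begin{align}
  M^v=(M^v_1,\dotsc,M^v_n)\colon V^*\longrightarrow\bigoplus_{i=1}^{n}(V_i\otimes V_i^*),
\end{align}
and $M^v_i(\xi)\in V_i\otimes V_i^*$ is the partial contraction of $v\otimes\xi$ over all tensor factors except the $i$-th. The key point, valid for \emph{every} $v$, is that taking the trace of $M^v_i(\xi)$ contracts the one remaining pair of factors, so $\operatorname{tr}M^v_i(\xi)=\langle\xi,v\rangle$ independently of $i$. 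Hence $\operatorname{Im}M^v$ lies in the subspace $\{(A_1,\dotsc,A_n)\colon\operatorname{tr}A_1=\dotsb=\operatorname{tr}A_n\}$, which has dimension $nd^2-(n-1)$, so $\operatorname{rank}M^v\le\min(d^n,nd^2-n+1)$ and
\begin{align}
  N_D(v)=d^n-\operatorname{rank}M^v\ \ge\ d^n-\min(d^n,nd^2-n+1)=\max(0,d^n-nd^2+n-1).
\end{align}

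\emph{Upper bound for $N_D$ and the obstacle.} It remains to show that a generic $v$ saturates this, i.e.\ $\operatorname{rank}M^v=\min(d^n,nd^2-n+1)$: equivalently $M^v$ is injective when $d^n\le nd^2-n+1$ and surjects onto the trace‑diagonal subspace when $d^n\ge nd^2-n+1$. By lower semicontinuity of the rank it is enough to produce one state of maximal rank; for $d=2$ this is the content of the numerical data behind Table \ref{table_invariants_for_hypercubic_systems}, and for general $(n,d)$ one would exhibit an explicit short state — of the type constructed in the later sections — and verify that the $nd^2\times d^n$ matrix of $M^v$ has the stated rank, either by displaying a nonvanishing maximal minor or by a dimension count on the incidence variety $\{(v,\xi)\colon M^v(\xi)=0\}\subset V\times V^*$. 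Producing such a state uniformly in $n$ and $d$ — i.e.\ a state whose $M^v$ is as surjective as the $n-1$ forced trace relations permit — is the real obstacle; everything else is bookkeeping with the definitions and the soft fact that the MEC is the generic class. Combining the two bounds gives $N_D=\max(0,d^n-nd^2+n-1)$ on the MEC, which together with $n_{D,i}=0$ is the assertion of the conjecture.
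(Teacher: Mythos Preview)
The paper does not prove this statement --- it is presented as a conjecture, supported by the numerical data in Table~\ref{table_invariants_for_hypercubic_systems}. The only rigorous result toward it is Theorem~\ref{theorem_hypercubic_systems_n_3}, which treats $n=3$ by taking the explicit $S_3$-symmetric state \eqref{v_s_3} and computing its four invariants $(n_1,n_2,n_3,n_{1,2,3})$ by direct linear algebra: writing out the kernel equations \eqref{k_1_equation}--\eqref{k_1_2_3_equation_3}, solving them in stages, and counting free parameters. Nothing analogous is offered for $n\ge 4$.

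Your route is genuinely different and more conceptual. The semicontinuity argument --- each invariant is the kernel dimension of a map whose entries are linear in $v$, hence upper-semicontinuous, hence minimized on a Zariski-open set, and finitely many such sets meet in a nonempty open set --- cleanly establishes that the MEC is the generic class and in passing settles Conjecture~\ref{conjecture_typical_states_are_MESs}, which the paper leaves open. Your trace-relation lower bound is also not in the paper: the observation that $\operatorname{tr}M^v_i(\xi)=\langle\xi,v\rangle$ for every $i$ forces $\operatorname{Im}M^v$ into a codimension-$(n{-}1)$ subspace and yields $N_D\ge\max(0,d^n-nd^2+n-1)$ for \emph{every} state. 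Combined with Theorem~\ref{theorem_hypercubic_systems_n_3}, this actually closes the $n=3$ case properly, since the paper's theorem only computes the invariants of one state without arguing that they are minimal.

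That said, your argument is incomplete in exactly the place you flag. The matching upper bound --- exhibiting even one $v$ with $\operatorname{rank}M^v=\min(d^n,\,nd^2-n+1)$ --- is the entire substance of the conjecture, and you do not supply it. Pointing to ``the numerical data behind Table~\ref{table_invariants_for_hypercubic_systems}'' or to the short states of later sections is precisely how the paper motivates the conjecture; it is not a proof. For general $n$ this witness problem is Conjecture~\ref{conjecture_hypercubic_systems_n}, also left open. So neither you nor the paper has a proof for $n\ge 4$; what you have contributed beyond the paper is the unconditional lower bound and the genericity argument.
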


Note that Conjecture \ref{conjecture_hypercubic_systems} specifies all the algebraic invariants in  $\mathcal{N}_D$ only for $n=3$, as there are unspecified invariants for $n\ge 4$.
Compare  Tables \ref{table_numbers_of_invariants} and \ref{table_invariants_for_hypercubic_systems}.

It is convenient to represent any state $v\in \otimes_{i=1}^n V_i$ by an $n$-dimensional diagram that includes the point $(j_1,\dotsc,j_n)$ of the $d_1\times \dotsb \times d_n$ lattice if and only if $v_{j_1,\dotsc,j_n}\not =0$, no matter what this nonzero values is.
The state with all nonzero coefficients in the expansion \eqref{v} is represented by a diagram with every point of the $d_1\times \dotsb \times d_n$ lattice included.
Note that if we truncate this lattice to the size $d'_1\times \dotsb \times d'_n$, where $d'_i\le d_i$, $1\le i\le n$, then the state \eqref{v} becomes
\begin{align}
  v' &=\sum_{j_1=1}^{d'_1} \dotsb \sum_{j_n=1}^{d'_n} v_{j_1,\dotsc,j_n} e_{1,j_1} \otimes \dotsb \otimes e_{n,j_n}.
  \label{v_prime}
\end{align}
If \eqref{v} is an MES for the system $D=(d_1,\dotsc,d_n)$, then \eqref{v_prime} is an MES for the system $D=(d'_1,\dotsc,d'_n)$.
 
The length of a state $v$ is the number $L(v)$ of nonzero coefficients in a given expansion \eqref{v}, and such a state is represented by the above diagram with the number of lattice points equal to $L(v)$.
Although it is interesting to find MESs with the smallest number of nonzero coefficients $L_D$, this is a particularly challenging problem for arbitrary dimensions $D$.
Instead, a solution of a simpler problem is still useful: find MESs with the numbers of nonzero coefficients that are close to being the smallest but with such coefficients arranged in a symmetric manner. 
As a specific example of a set of such states, consider the following state of the $3$-partite cubic system:
\begin{align}
  v =\sum_{l=1}^d e_{1,l}\otimes e_{2,l}\otimes e_{3,l} +\sum_{l=2}^d (e_{1,1}\otimes e_{2,l}\otimes e_{3,l} +e_{1,l}\otimes e_{2,1}\otimes e_{3,l} +e_{1,l}\otimes e_{2,l}\otimes e_{3,1}).
  \label{v_s_3}
\end{align}
We call this state $S_3$-symmetric because it is invariant under all elements of the symmetric group $S_3$, which in this case are permutations of the set of the vector space labels $S=\{1,2,3\}$ (the set of first indices in $\{e_{i,j_i}\}$). 
Figure \ref{444MES} is a diagram for the state \eqref{v_s_3} with $d=4$, where we additionally connected the points on the 2D face diagonals and the 3D body diagonal for visualization.
If we truncate the $d\times d\times d$ diagram to a smaller $d'\times d'\times d'$ diagram, where $d'<d$, the corresponding state is an MES for the truncated system $D=(d',d',d')$; this is similar to the truncation of the full diagram that corresponds to a state with all nonzero coefficients.
On the other hand, if we truncate a symmetric hypercubic MES to a hypercuboid state, it does not necessarily reduce to an MES.
For example, an MES \eqref{v_s_3} with $d=5$ for $D=(5,5,5)$ reduces to an MES for $D=(4,5,5)$, but further reduction to $D=(3,5,5)$ does not yield an MES.
This should be contrasted with truncation of the fully saturated diagram, which always transforms an MES into an MES.
 
\begin{figure}[htpb]
\includegraphics[width=0.55\textwidth]{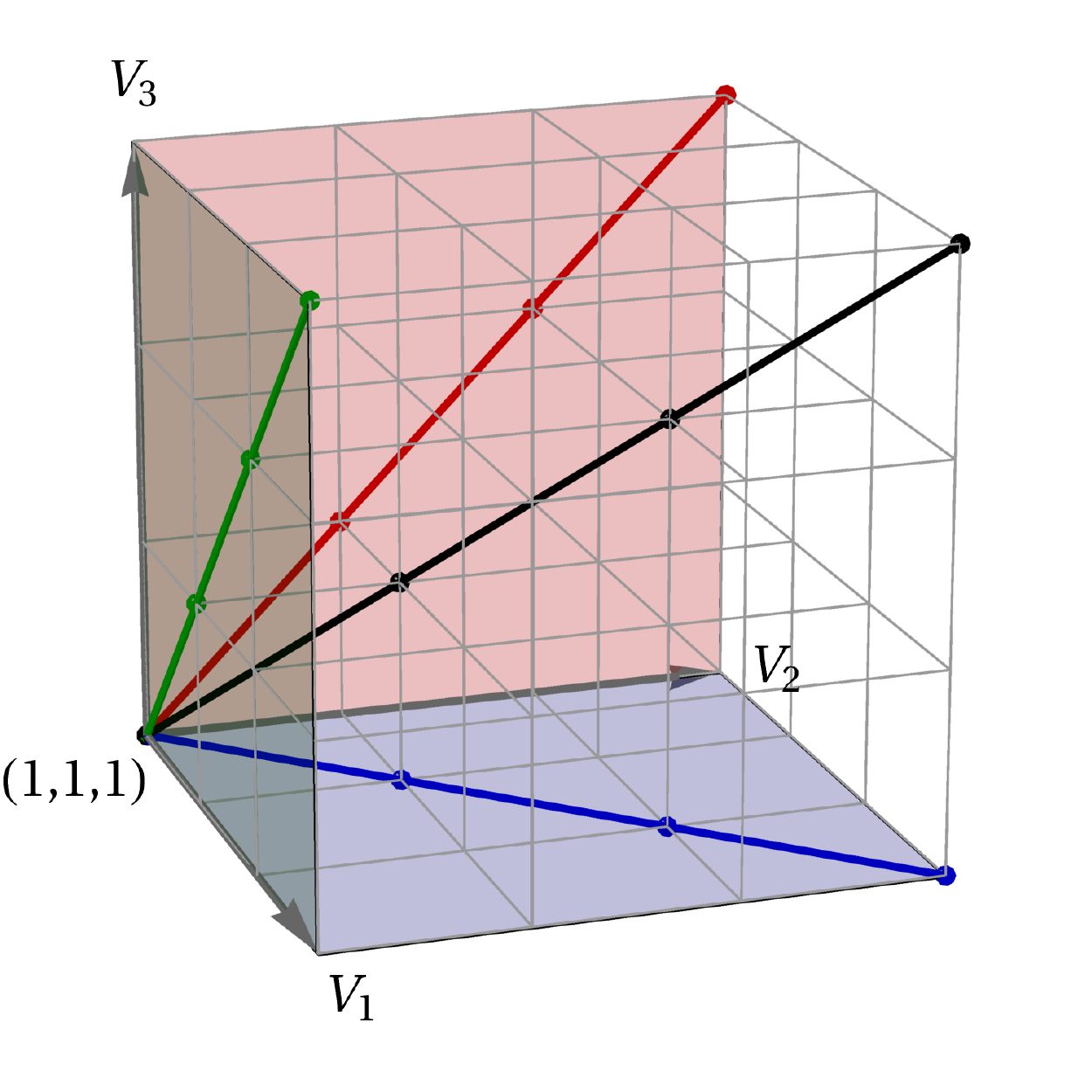}
\caption{The diagram representing the $S_3$-symmetric MEC state \eqref{v_s_3} for the tripartite system $D=(4,4,4)$, where only the lattice points on the four indicated intervals are nonzero.}
\label{444MES}
\end{figure}

Let us return to the $n$-partite hypercubic systems $D=(d,\dotsc,d)$.
We conjecture that one way to select an MES for such a system is to include in it an arbitrarily chosen corner of the hypercube and all the points that are on 2D side diagonals, 3D body diagonals, 4D hypercubic diagonals, etc., where all the diagonals contain the chosen corner.
Since the number of the $k$D hypercubic diagonals is the binomial coefficient $C(n,k)=n!(k!(n-k)!)^{-1}$ for $2\le k\le n$, and the number of all such diagonals is $\sum_{k=2}^n C(n,k) =2^n-n-1$, we find that the total number of points to be included is
\begin{align}
P(n,d)=1+(2^n-n-1)(d-1).
\end{align}

We note that for any typical state in the hypercubic system of $n$ $d$-qudits, the total number of lattice points in its diagram is almost surely $d^n$, and by Conjecture \ref{conjecture_typical_states_are_MESs} any such state is an MES with the probability $1$.
We therefore compare the numbers $P(n,d)$ and $d^n$ to see how much shorter such MESs are than typical states.
There is no difference between these two kinds of states for $d=1$ since $P(n,1)=1$.
The difference for qubits ($d=2$) becomes relatively smaller as $n$ increases because $2^{-n}P(n,2)=1-2^{-n}n$.
For $d\ge 3$, however, these MESs are significantly shorter for large $n$ since $d^{-n}P(n,d)=O(2^n d^{1-n})\to 0$, $n\to\infty$.

The above states are only one example of how to proceed with constructing MESs in hypercubic systems which are much shorter than typical states.

We choose the corner $e_{1,1}\otimes \dotsb \otimes e_{n,1}$ and arrive at the following conjecture.

\begin{conjecture}
  For the $n$-partite system with the dimensions $D=(d,\dotsc,d)$, where $d\ge 2$, the $S_n$-symmetric state
\begin{align}
    v&=e_{1,1}\otimes\dotsb\otimes  e_{n,1} +\sum_{l=2}^d \sum_{k=2}^n \sum_{\sigma\in C(S,k)} e_{1,1}\otimes\dotsb\otimes e_{\sigma_1-1,1}\otimes e_{\sigma_1,l}\otimes e_{\sigma_1+1,1}\otimes\dotsb \nn \\
    &\dotsb\otimes e_{\sigma_k-1,1}\otimes e_{\sigma_k,l}\otimes e_{\sigma_k+1,1} \otimes\dotsb\otimes e_{n,1},
    \label{v_combinations}
\end{align}
  is an MES and it has the principal invariant
  \begin{align}
  N_D= \max{(0,d^n-nd^2+n-1)}, \ d\ge 2.
  \end{align}
Here $C(S,k)$ is a set of all $k$-combinations of the set $S=\{1,\dotsc,n\}$ and  $\sigma=(\sigma_1,\dotsc,\sigma_k)$ is a subset of $k$ distinct elements of $S$. 
The number of elements of $C(S,k)$ equals the binomial coefficient $C(n,k)$.
  \label{conjecture_hypercubic_systems_n}
\end{conjecture}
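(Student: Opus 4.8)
\emph{Overall plan.} The plan is to split Conjecture~\ref{conjecture_hypercubic_systems_n} into two tasks: (a) compute the principal invariant $N_D(v)$ of the state $v$ in \eqref{v_combinations} and show it equals $\max(0,d^n-nd^2+n-1)$; and (b) show that \emph{every} invariant in $\mathcal{N}_D(v)$ attains its least possible value, so that $v$ is an MES in the sense of Definition~\ref{definition_maximally_entangled_states_and_classes}. Each invariant in $\mathcal{N}_D$ is the dimension of a kernel (or of an intersection) of a map depending algebraically on the coordinates of $v$, hence is everywhere at least its generic value, and the generic value is the minimum; so for (b) it is enough to prove, invariant by invariant, the reverse inequality $\le$. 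First I would settle the principal invariant, which is the tractable core.

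\emph{The principal invariant.} Here I would use the description that the construction of \cite{Buniy:2010yh,Buniy:2010zp} attaches to the antichain $X_n=\{J_1,\dotsc,J_n\}$ of complements of singletons: $N_D(v)$ is the dimension of the cokernel of the infinitesimal action map $\mathfrak g\to V$, $\mathfrak g=\bigoplus_{i=1}^n\mathfrak{gl}(V_i)$ --- equivalently the codimension of the $G$-orbit of $v$ --- so that $N_D(v)=\dim V-(nd^2-\dim\mathfrak g_v)=d^n-nd^2+\dim\mathfrak g_v$, where $\mathfrak g_v$ is the stabilizer subalgebra. The torus $\{(\lambda_1\mathrm{id},\dotsc,\lambda_n\mathrm{id}):\prod_i\lambda_i=1\}\le G$ fixes every state, so $\mathfrak g_v$ always contains its $(n-1)$-dimensional Lie algebra of traceless scalar tuples; with $N_D\ge0$ this gives the lower bound $N_D(v)\ge\max(0,d^n-nd^2+n-1)$ for free. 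It remains to match it: (i) when $d^n\ge nd^2-n+1$, show $\mathfrak g_v$ is \emph{exactly} that torus; (ii) in the finitely many remaining cases ($n=2$, and $(n,d)=(3,2)$), show the $G$-orbit of $v$ is dense. Part (ii) is a direct check: for $n=2$ the state is the full-rank ``identity matrix'' $\sum_l e_{1,l}\otimes e_{2,l}$, and for $(3,2)$ the two $2\times2$ slices of $v$ form a regular pencil with distinct roots, so $v$ is $\GL_2^{3}$-equivalent to the GHZ state. For part (i) I would write the stabilizer equation $\sum_{i=1}^n(\mathrm{id}\otimes\dotsb\otimes B_i\otimes\dotsb\otimes\mathrm{id})\,v=0$ in the standard basis of $V$, using the explicit support of $v$: a coordinate is nonzero iff the set of positions carrying an index $\ne1$ is empty (the corner) or has size $\ge2$ and is ``monochrome''. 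Reading off the coefficients on the basis vectors that lie just outside this support --- the ``spokes'' $(1,\dotsc,l,\dotsc,1)$, the two-color points, and their higher-simplex analogues --- should force each $B_i$ to be off-diagonally zero with a constant diagonal, and reading off the coefficient on the corner then forces those constants to sum to zero. The $S_n$-symmetry of $v$ reduces this to conditions on $B_1$ plus symmetry, which cuts the casework down a lot.

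\emph{The subsidiary invariants.} For the invariants coming from a block bipartition $A\mid A^{\mathrm c}$ of $\{1,\dotsc,n\}$, minimality is the statement that the flattening of $v$ to a $d^{|A|}\times d^{n-|A|}$ matrix has full rank; I would prove this directly, by producing a nonvanishing maximal minor built from the corner and the monochrome simplices (equivalently, checking that the $A$-slices of $v$ are linearly independent), or by grouping $A$ and $A^{\mathrm c}$ into single subsystems and verifying that the grouped state is bipartite-maximally-entangled. For the invariants of genuinely higher antichains --- those whose minimum is positive, such as the entries $7^{10}$ for $n=5$ and $36^{15},18^{20}$ for $n=6$ in Table~\ref{table_invariants_for_hypercubic_systems} --- I would, for each fixed small $n$, run through the explicit lists of $n$-partite mappings in \cite{Buniy:2010zp,BKx} and check the finitely many remaining inequalities $\mathcal{N}_D(v)\le\mathcal{N}_D(\tilde v)$ by the same coordinate bookkeeping, again exploiting the $S_n$-symmetry and the truncation property (an MES of $D$ restricts to an MES of any $(d',\dotsc,d')$ with $d'\le d$, and the higher-antichain invariants of $v$ are controlled by contractions $\xi\cdot v$, which again have simplex-shaped supports).

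\emph{Main obstacle.} I expect the principal-invariant argument to go through for all $n$ and $d\ge2$, so part (a) and the explicit formula for $N_D$ should be within reach. The genuine difficulty --- and the reason the statement is only conjectured --- lies in part (b) for general $n$: for $n\ge5$ the minimum values of the higher-antichain invariants, and indeed even their count $\nu_n$, are not known in closed form, so there is no explicit target to compare $\mathcal{N}_D(v)$ against. A uniform proof would need a structural understanding of how the whole family of antichain invariants of a ``corner-plus-diagonal-simplices'' state behaves --- plausibly an induction on $n$ via the contractions $\xi\cdot v$ --- which is not currently available.
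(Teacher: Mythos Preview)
The paper does not attempt the conjecture for general $n$; it only establishes the case $n=3$ (Theorem~\ref{theorem_hypercubic_systems_n_3}), where the subsidiary invariants reduce to the three bipartite flattening ranks $n_1,n_2,n_3$, dispatched in a few lines. The obstacle you single out in part~(b) --- the higher-antichain invariants for $n\ge5$, whose minimal values are not known in closed form --- is exactly why the statement remains a conjecture in the paper, so your appraisal of the difficulty is accurate.

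For the principal invariant at $n=3$, your route is genuinely different. The paper works directly in the space $K_{1,2,3}\subset V$: it writes out the $3d^2$ contraction equations in the $d^3$ unknowns $w_{i,j,k}$, exploits the $S_3$ symmetry, solves blocks of equations explicitly, and reduces the residual system to $AW=B$ with a $3(d-1)\times 3(d-1)$ block matrix $A$ assembled from identity and cyclic-shift blocks; a block LU decomposition gives $\rank{A}$ (with an even/odd-$d$ case split), and a free-parameter count yields $n_{1,2,3}=d^3-3d^2+2$. Your proposal passes to the dual problem via the identification $N_D(v)=\dim K_{1,2,3}=d^n-nd^2+\dim\mathfrak g_v$ --- which is correct: the defining equations~\eqref{k_1_2_3_equation_1}--\eqref{k_1_2_3_equation_3} are precisely the conditions that $w$ annihilate the orbit tangent $\mathfrak g\cdot v$ --- and computes $\dim\mathfrak g_v$ by reading the stabilizer equation on off-support coordinates of $v$. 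What your approach buys is a $d$-independent target ($\dim\mathfrak g_v=n-1$): once the off-diagonal entries of each $B_i$ and any non-constancy of the diagonals are killed, the argument terminates, with no even/odd split and no block rank computation, and the formulation generalizes cleanly in $n$. What the paper's approach buys is complete explicitness: the reduction to a concrete rank computation is verifiable line by line and needs no reinterpretation of the invariant as an orbit codimension.
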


We check that for $n=3$ we have
\begin{align}
S=\{1,2,3\}, \ C(S,2)=\{(1,2),(1,3),(2,3)\}, \ C(S,3)=\{(1,2,3)\}
\end{align}
and \eqref{v_combinations} becomes \eqref{v_s_3}.
Similarly, for $n=4$ we have
\begin{align}
&S=\{1,2,3,4\}, \ C(S,2)=\{(1,2),(1,3),(1,4),(2,3),(2,4),(3,4)\}, \nn \\
&C(S,3)=\{(1,2,3),(1,2,4),(1,3,4),(2,3,4)\}, \ C(S,4)=\{(1,2,3,4)\}  
\end{align}
and \eqref{v_combinations} becomes
\begin{align}
  v =\sum_{l=1}^d e_{1,l}\otimes e_{2,l}\otimes e_{3,l}\otimes e_{4,l} +\sum_{l=2}^d \bigl( e_{1,l}\otimes e_{2,l}\otimes e_{3,1}\otimes e_{4,1} + e_{1,l}\otimes e_{2,1}\otimes e_{3,l}\otimes e_{4,1} \nn \\
  + e_{1,l}\otimes e_{2,1}\otimes e_{3,1}\otimes e_{4,l} + e_{1,1}\otimes e_{2,l}\otimes e_{3,l}\otimes e_{4,1} + e_{1,1}\otimes e_{2,l}\otimes e_{3,1}\otimes e_{4,l} + e_{1,1}\otimes e_{2,1}\otimes e_{3,l}\otimes e_{4,l} \nn \\ 
  + e_{1,l}\otimes e_{2,l}\otimes e_{3,l}\otimes e_{4,1} + e_{1,l}\otimes e_{2,l}\otimes e_{3,1}\otimes e_{4,l} + e_{1,l}\otimes e_{2,1}\otimes e_{3,l}\otimes e_{4,l} + e_{1,1}\otimes e_{2,l}\otimes e_{3,l}\otimes e_{4,l} \bigr).
  \label{v_z_4}
\end{align}
In the corresponding $4$-dimensional diagram, the first term in \eqref{v_z_4} is the $4$D hypercubic diagonal, the first six terms in the second sum are the $2$D body diagonals, and the remaining four terms in the second sum are the $3$D side diagonals.

We prove Conjecture \ref{conjecture_hypercubic_systems_n} for $n=3$ as Theorem \ref{theorem_hypercubic_systems_n_3} in Appendix \ref{section_a_theorem_for_3_partite_hypercubic_systems}.

\subsection{Hypercuboid Systems}
\label{section_hypercuboid_systems}

Other cases for which we have examples of maximally entangled states are $n$-partite hypercuboid systems $D=(d_1,\dotsc,d_n)$ for which not all the dimensions of the subsystems $d_1,\dotsc,d_n$ are equal.
Since these systems do not have the symmetry of the hypercubic systems, this leads to new features in their entanglement classes.
We now provide a representative collection of examples by way of a set of tables of the most entangled classes for the systems that we have studied.
While the results cannot be summarized as easily as for the hypercubic systems in Section \ref{section_hypercubic_systems}, there are still some interesting systematic features to explore, i.e., taxonomy.
One thing to note is that, unlike in many of the hypercubic systems, in most hypercuboid examples the principal invariant $N_D$ is not necessarily the only invariant which is non-zero for states in the MEC.

\subsubsection{3-partite examples}
\label{3_partite_examples}

All results for the systems with the dimensions $D=(d_1,d_2,d_3)$ are symmetric with respect to permutations in $D$, and so it is sufficient to consider only the cases with $d_1\le d_2\le d_3$.
In all cases, among the $4$ invariants in the set $\mathcal{N}_D$, the principal invariant $N_D$ is the only invariant which is non-zero for all states in the MECs.

When MESs of a system have nonzero invariant values, this implies that the space available within the system is too limited to allow all possible would-be-entanglements.
This tension between the system volume and entanglement can be relaxed by increasing the size of one or more subsystem.
For example, in the tripartite case, increasing $d_3$ beyond $d_1 d_2$ relaxes the tension whereupon all invariant values vanish for states in the MEC.
Examples of this phenomenon will be presented below.

We have studied further properties of the systems with the dimensions $D=(d_1,d_2,d_3)$ for $2\le d_1\le 3$ and various $d_2$ and $d_3$, and found that the values of $N_D$ follow simple patterns, as described below.

For $D=(2,d_2,d_3)$, we find $N_{(2,d_2,d_3)}=\delta_{d_2,d_3}\max{(0,d_2-3)}$, where $\delta_{d_2,d_3}$ is the Kronecker delta.

For $D=(3,d_2,d_3)$, the results are somewhat more complicated; see Tables \ref{table_3_d2_d3} and  \ref{table_polynomials}.
%
%
\begin{table}[htpb]
\begin{center}
\newcolumntype{C}{>{\centering\arraybackslash}p{1.4em}}
\[\begin{array}{c@{\hspace{1.em}}*{16}{C}}
\toprule
& 1 & 2 & 3 & 4 & 5 & 6 & 7 & 8 & 9 & 10 & 11 & 12 & 13 & 14 & 15 & 16 \\[3pt]
 1 & 0 & 0 & 0 & 0 & 0 & 0 & 0 & 0 & 0 & 0 & 0 & 0 & 0 & 0 & 0 & 0 \\ 
 2 & 0 & 0 & 0 & 0 & 0 & 0 & 0 & 0 & 0 & 0 & 0 & 0 & 0 & 0 & 0 & 0 \\ 
 3 & 0 & 0 & 2 & 4 & 4 & 2 & 0 & 0 & 0 & 0 & 0 & 0 & 0 & 0 & 0 & 0 \\ 
 4 & 0 & 0 & 4 & 9 & 12 & 13 & 12 & 9 & 4 & 0 & 0 & 0 & 0 & 0 & 0 & 0 \\
 5 & 0 & 0 & 4 & 12 & 18 & 22 & 24 & 24 & 22 & 18 & 12 & 4 & 0 & 0 & 0 & 0 \\ 
 6 & 0 & 0 & 2 & 13 & 22 & 29 & 34 & 37 & 38 & 37 & 34 & 29 & 22 & 13 & 2 & 0 \\ 
 7 & 0 & 0 & 0 & 12 & 24 & 34 & 42 & 48 & 52 & 54 & 54 & 52 & 48 & 42 & 34 & 24 \\ 
 8 & 0 & 0 & 0 & 9 & 24 & 37 & 48 & 57 & 64 & 69 & 72 & 73 & 72 & 69 & 64 & 57 \\ 
 9 & 0 & 0 & 0 & 4 & 22 & 38 & 52 & 64 & 74 & 82 & 88 & 92 & 94 & 94 & 92 & 88 \\ 
10 & 0 & 0 & 0 & 0 & 18 & 37 & 54 & 69 & 82 & 93 & 102 & 109 & 114 & 117 & 118 & 117 \\ 
11 & 0 & 0 & 0 & 0 & 12 & 34 & 54 & 72 & 88 & 102 & 114 & 124 & 132 & 138 & 142 & 144 \\ 
12 & 0 & 0 & 0 & 0 & 4 & 29 & 52 & 73 & 92 & 109 & 124 & 137 & 148 & 157 & 164 & 169 \\ 
13 & 0 & 0 & 0 & 0 & 0 & 22 & 48 & 72 & 94 & 114 & 132 & 148 & 162 & 174 & 184 & 192 \\ 
14 & 0 & 0 & 0 & 0 & 0 & 13 & 42 & 69 & 94 & 117 & 138 & 157 & 174 & 189 & 202 & 213 \\ 
15 & 0 & 0 & 0 & 0 & 0 & 2 & 34 & 64 & 92 & 118 & 142 & 164 & 184 & 202 & 218 & 232 \\ 
16 & 0 & 0 & 0 & 0 & 0 & 0 & 24 & 57 & 88 & 117 & 144 & 169 & 192 & 213 & 232 & 249 \\ 
\bottomrule
\end{array}\]
\end{center}
\caption{The principal entanglement invariant $N_D$ as a function of $d_2$ and $d_3$ for the MECs for $D=(3,d_2,d_3)$.
For each of these cases, the full set of invariants is $\mathcal{N}_D=(0,0,0,N_D)$.
The sequence $N_{(3,d_2,d_3)}$ is given by the polynomials $N_{(3,d_2,d_3)}$ for $d_{3,\textrm{min}}\le d_3\le d_{3,\textrm{max}}$, which are listed in Table \ref{table_polynomials} for $3\le d_2\le 16$.}
\label{table_3_d2_d3}
\end{table}
%
%
Table \ref{table_3_d2_d3} is symmetric with respect to its main diagonal, $N_{(3,d_2,d_3)}=N_{(3,d_3,d_2)}$, and  the diagonal entries satisfy $N_{(3,d_3,d_3)}=\max{(0,d_3^2-7)}$.
We find $N_{(3,d_2,d_3)}=0$ for $1\le d_2\le 2$ and any $d_3$.
In the row corresponding to a fixed value of $d_2\ge 4$, the invariant $N_{(3,d_2,d_3)}$ is non-zero only for $\ceil{\frac{1}{3}d_2}+1\le d_3\le 3d_2-3$, where $\ceil{x}$ is the least integer greater than or equal to $x$.
For the same row, the invariant $N_{(3,d_2,d_3)}$ reaches the maximum either at $d_3=\frac{3}{2}d_2$ (when $d_2$ is even) or at $d_3\in\{\frac{1}{2}(3d_2-1),\frac{1}{2}(3d_2+1)\}$ (when $d_2$ is odd), and non-zero values of $N_{(3,d_2,d_3)}$ in this row are symmetric with respect to the these maximum values.
The corresponding patterns hold for columns of the table as a result of the symmetry of the table with respect to its main diagonal.

\begin{table}[htpb]
\begin{center}
\addtolength{\tabcolsep}{2pt}    
\begin{tabular}{ccc}
\toprule
$d_2$ & $N_{(3,d_2,d_3)}$ & $d_{3,\textrm{min}}\le d_3\le d_{3,\textrm{max}}$ \\
\midrule
$3$ & $-d_3^2+9d_3-16$ & $3\le d_3\le 6$ \\
$4$ & $-d_3^2+12d_3-23$ & $3\le d_3\le 9$ \\
$5$ & $-d_3^2+15d_3-32$ & $3\le d_3\le 12$ \\
$6$ & $-d_3^2+18d_3-43$ & $3\le d_3\le 15$ \\
$7$ & $-d_3^2+21d_3-56$ & $4\le d_3\le 17$ \\
$8$ & $-d_3^2+24d_3-71$ & $4\le d_3\le 20$ \\
$9$ & $-d_3^2+27d_3-88$ & $4\le d_3\le 23$ \\
$10$ & $-d_3^2+30d_3-107$ & $5\le d_3\le 25$ \\
$11$ & $-d_3^2+33d_3-128$ & $5\le d_3\le 28$ \\
$12$ & $-d_3^2+36d_3-151$ & $5\le d_3\le 31$ \\
$13$ & $-d_3^2+39d_3-176$ & $6\le d_3\le 33$ \\
$14$ & $-d_3^2+42d_3-203$ & $6\le d_3\le 36$ \\
$15$ & $-d_3^2+45d_3-232$ & $6\le d_3\le 39$ \\
$16$ & $-d_3^2+48d_3-263$ & $7\le d_3\le 41$ \\
\bottomrule
\end{tabular}
\addtolength{\tabcolsep}{-2pt}    
\end{center}
\caption{Polynomials $N_{(3,d_2,d_3)}$ and the ranges of their applicabilities $d_{3,\textrm{min}}\le d_3\le d_{3,\textrm{max}}$ for all the data given in Table \ref{table_3_d2_d3}.}
\label{table_polynomials}
\end{table}

\subsubsection{4-partite examples}
\label{4_partite_examples}
 
Now moving to $4$-partite systems, we encounter computational challenges that restrict the range of dimensions $D=(d_1,d_2,d_3,d_4)$ that can be explored even with the limited goal of studying only MECs.  
All results are symmetric with respect to permutations of the dimensions in $D$, and so it is sufficient to consider only the cases with $d_1\le d_2\le d_3\le d_4$.
In contrast to $3$-partite systems, we find that among the $19$ invariants in the set $\mathcal{N}_D$ of all $4$-partite invariants, the principal invariant $N_D$ is not necessarily the only invariant which is non-zero for states in the MEC.
We present the results for the principal invariant $N_D$ for a small selection of systems with  $D\in\{(2,2,2,d_4),(2,2,3,d_4),(2,2,4,d_4),(2,3,3,d_4)\}$ in Tables \ref{table_2_2_2_d4}, \ref{table_2_2_3_d4}, \ref{table_2_2_4_d4}, \ref{table_2_3_3_d4}.
Generalizing the results for the $3$-partite examples found in Section \ref{3_partite_examples}, we conjecture that for all the above $4$-partite cases, $N_D$ is non-zero only for $2\le d_4\le d_1 d_2 d_3 -2$ and that $N_D$ reaches the maximum either at $d_4=\frac{1}{2}d_1 d_2 d_3$ (when $d_1 d_2 d_3$ is even) or at $d_4\in\{\frac{1}{2}(d_1 d_2 d_3-1),\frac{1}{2}(d_1 d_2 d_3+1)\}$ (when $d_1 d_2 d_3$ is odd. 

\begin{table}[htpb]
\begin{center}
\begin{tabular}{cccccccc}
\toprule
$d_4$ & $1$ & $2$ & $3$ & $4$ & $5$ & $6$ & $7$ \\
\midrule
$N_{(2,2,2,d_4)}$ & $0$ & $3$ & $6$ & $7$ & $6$ & $3$ & $0$ \\
\bottomrule
\end{tabular}
\end{center}
\caption{The principal entanglement invariants $N_D$ for the MECs for $D=(2,2,2,d_4)$.
The sequence $N_D$ is given by the polynomial $N_{(2,2,2,d_4)}=-d_4^2+8d_4-9$ for $2\le d_4\le 6$.}
\label{table_2_2_2_d4}
\end{table}%

\begin{table}[htpb]
\begin{center}
\begin{tabular}{ccccccccccccccccc}
\toprule
$d_4$ & $1$ & $2$ & $3$ & $4$ & $5$ & $6$ & $7$ & $8$ & $9$ & $10$ & $11$ \\
\midrule
$N_{(2,2,3,d_4)}$ & $0$ & $6$ & $13$ & $18$ & $21$ & $22$ & $21$ & $18$ & $13$ & $6$ & $0$ \\
\bottomrule
\end{tabular}
\end{center}
\caption{The principal entanglement invariants $N_D$ for the MECs for $D=(2,2,3,d_4)$.
The sequence $N_D$ is given by the polynomial $N_{(2,2,3,d_4)}=-d_4^2+12d_4-14$ for $2\le d_4\le 10$.}
\label{table_2_2_3_d4}
\end{table}%

\begin{table}[htpb]
\begin{center}
\begin{tabular}{cccccccccccccccccc}
\toprule
$d_4$ & $1$ & $2$ & $3$ & $4$ & $5$ & $6$ & $7$ & $8$ & $9$ & $10$ & $11$ & $12$ & $13$ & $14$ & $15$ \\
\midrule
$N_{(2,2,4,d_4)}$ & $0$ & $7$ & $18$ & $27$ & $34$ & $39$ & $42$ & $43$ & $42$ & $39$ & $34$ & $27$ & $18$ & $7$ & $0$ \\
\bottomrule
\end{tabular}
\end{center}
\caption{The principal entanglement invariants $N_D$ for the MECs for $D=(2,2,4,d_4)$.
The sequence $N_D$ is given by the polynomial $N_{(2,2,4,d_4)}=-d_4^2+16d_4-21$ for $2\le d_4\le 14$.}
\label{table_2_2_4_d4}
\end{table}%

\begin{table}[htpb]
\begin{center}
\begin{tabular}{cccccccccccccccccc}
\toprule
$d_4$ & $1$ & $2$ & $3$ & $4$ & $5$ & $6$ & $7$ & $8$ & $9$ & $10$ & $11$ & $12$ & $13$ & $14$ & $15$ & $16$ & $17$ \\
\midrule
$N_{(2,3,3,d_4)}$ & $0$ & $13$ & $26$ & $37$ & $46$ & $53$ & $58$ & $61$ & $62$ & $61$ & $58$ & $53$ & $46$ & $37$ & $26$ & $13$ & $0$ \\
\bottomrule
\end{tabular}
\end{center}
\caption{The principal entanglement invariants $N_D$ for the MECs for $D=(2,3,3,d_4)$.
The sequence $N_D$ is given by the polynomial $N_{(2,3,3,d_4)}=-d_4^2+18d_4-19$ for $2\le d_4\le 16$.}
\label{table_2_3_3_d4}
\end{table}%

\subsubsection{5-partite examples}
\label{5_partite_examples}

Computations for $5$-partite systems are even more challenging. 
Among the $167$ invariants in the set $\mathcal{N}_D$ of all $5$-partite invariants, the principal invariant $N_D$ is not necessarily the only invariant which is non-zero for states in the MEC.
We present the results for the principal invariant $N_D$ only for two examples in Tables \ref{table_2_2_2_2_d5} and \ref{table_2_2_2_3_d5}.
A complete set of invariants $\mathcal{N}_D$ for the MEC for $D=(2,2,2,2,d_5)$ with the largest value of $N_D$, for example, is
\begin{align}
\mathcal{N}_{(2,2,2,2,8)} =(0^{18}, 5^1, 0^2, 5^1, 0^1, 5^2, 0^{87}, 43^1, 0^1, 43^2, 0^1, 43^3, 0^{25}, 25^2, 0^1, 25^2, 52^1, 0^{16}),
\end{align}
where the notation $a^m$ means that the invariant value $a$ is repeated $m$ times, and the last non-zero entry is the value of the principal invariant, i.e., $N_{(2,2,2,2,8)}=52$.
Generalizing the results for the $3$-partite  and $4$-partite examples found in Sections \ref{3_partite_examples} and \ref{4_partite_examples}, we conjecture that for all the above $5$-partite cases, $N_D$ is non-zero only for $1\le d_5\le d_1 d_2 d_3 d_4 -1$ and that $N_D$ reaches the maximum either at $d_5=\frac{1}{2}d_1 d_2 d_3 d_4$ (when $d_1 d_2 d_3 d_4$ is even) or at $d_5\in\{\frac{1}{2}(d_1 d_2 d_3 d_4-1),\frac{1}{2}(d_1 d_2 d_3 d_4+1)\}$ (when $d_1 d_2 d_3 d_4$ is odd). 


\begin{table}[htpb]
\begin{center}
\begin{tabular}{ccccccccccccccccc}
\toprule
$d_5$ & $1$ & $2$ & $3$ & $4$ & $5$ & $6$ & $7$ & $8$ & $9$ & $10$ & $11$ & $12$ & $13$ & $14$ & $15$ & $16$ \\
\midrule
$N_{(2,2,2,2,d_5)}$ & $3$ & $16$ & $27$ & $36$ & $43$ & $48$ & $51$ & $52$ & $51$ & $48$ & $43$ & $36$ & $27$ & $16$ & $3$ & $0$ \\
\bottomrule
\end{tabular}
\end{center}
\caption{The principal entanglement invariants $N_D$ for the MECs for $D=(2,2,2,2,d_5)$.
The sequence $N_D$ is given by the polynomial $N_{(2,2,2,2,d_5)}=-d_5^2+16d_5-12$ for $1\le d_5\le 15$.}
\label{table_2_2_2_2_d5}
\end{table}%

\begin{table}[htpb]
\begin{center}
\begin{tabular}{ccccccccccccccccccccccccc}
\toprule
$d_5$ & $1$ & $2$ & $3$ & $4$ & $5$ & $6$ & $7$ & $8$ & $9$ & $10$ & $11$ & $12$ & $13$ & $14$ & $15$ & $16$ & $17$ & $18$ & $19$ & $20$ & $21$ & $22$ & $23$ & $24$ \\
\midrule
$N_{(2,2,2,3,d_5)}$ & $6$ & $27$ & $46$ & $63$ & $78$ & $91$ & $102$ & $111$ & $118$ & $123$ & $126$ & $127$ & $126$ & $123$ & $118$ & $111$ & $102$ & $91$ & $78$ & $63$ & $46$ & $27$ & $6$ & $0$ \\
\bottomrule
\end{tabular}
\end{center}
\caption{The principal entanglement invariants $N_D$ for the MECs for $D=(2,2,2,3,d_5)$.
The sequence $N_D$ is given by the polynomial $N_{(2,2,2,3,d_5)}=-d_5^2+24d_5-17$ for $1\le d_5\le 23$.}
\label{table_2_2_2_3_d5}
\end{table}%
Higher dimensional $5$-partite examples become increasingly more difficult to generate.

\subsection{General Systems}
\label{section_general_systems}
 
The results in Sections \ref{section_hypercubic_systems} and \ref{section_hypercuboid_systems} lead to the following conjecture.

\begin{conjecture}
The principal entanglement invariant $N_D$ for any $n$-partite system with the dimensions $D=(d_1,\dotsc,d_n)$ is
\begin{align}
    N_{(2,d_2,d_3)} = N_{(2,d_3,d_2)} = N_{(d_2,2,d_3)} = N_{(d_2,d_3,2)} = N_{(d_3,2,d_2)} = N_{(d_3,d_2,2)} = \delta_{d_2,d_3}\max{(0,d_2-3)}
    \label{N_conjecture_2_d2_d3}
\end{align}
for $n=3$, $\min{(d_1,d_2,d_3)}=2$, where $\delta_{d_2,d_3}$ is the Kronecker delta, and
\begin{align}
    N_{(d_1,\dotsc,d_n)} =\max{\biggl(0,\ \prod_{i=1}^n d_i -\sum_{i=1}^n d_i^2 +n-1\biggr)}
    \label{N_conjecture}
\end{align}
for either $n=3$, $\min{(d_1,d_2,d_3)}\ge 3$, or $n\ge 4$.
\label{conjecture_N}
\end{conjecture}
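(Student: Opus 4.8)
The plan is to translate Conjecture~\ref{conjecture_N} into a question about generic orbit dimensions. Write $G=\times_{i=1}^n\GL(V_i)$ for the group acting on $V=\otimes_{i=1}^n V_i$, let $\Gamma=\bigoplus_{i=1}^n\mathrm{End}(V_i)$ be its Lie algebra acting by $\xi\cdot v=\sum_{i=1}^n(\mathrm{id}\otimes\cdots\otimes\xi_i\otimes\cdots\otimes\mathrm{id})v$, and let $\mu_v\colon\Gamma\to V$, $\xi\mapsto\xi\cdot v$, be the infinitesimal orbit map. The first step is to check---essentially by reformulating the definition of the principal invariant attached to the cover $X_n=\{J_1,\dots,J_n\}$ in \cite{Buniy:2010zp}---that $N_D(v)$ equals the codimension of $G\cdot v$, i.e.\ $N_D(v)=\dim V-\operatorname{rank}\mu_v=\dim V-\sum_{i=1}^n d_i^2+\dim\ker\mu_v$. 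Since $N_D(v)$ is minimized exactly on the dense open locus where $\dim(G\cdot v)$ is maximal, and the MEC (where every invariant, $N_D$ included, attains its minimum) lies in that locus, the quantity $N_D$ appearing in Conjecture~\ref{conjecture_N} equals $\dim V-\sum_i d_i^2+s_D$, where $s_D=\min_v\dim\ker\mu_v$ is the dimension of the generic stabilizer. Because $\xi\cdot v=0$ for every ``relative scalar'' $\xi=(c_1\mathrm{id},\dots,c_n\mathrm{id})$ with $\sum_i c_i=0$, and because $\operatorname{rank}\mu_v\le\dim V$, we have the universal bound $s_D\ge\max\!\bigl(n-1,\ \sum_i d_i^2-\dim V\bigr)$, which already yields the inequality ``$\ge$'' in \eqref{N_conjecture} (and the trivial $N_D\ge0$). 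The conjecture asserts that this bound is sharp for all $D$ \emph{except} permutations of $(2,d,d)$ with $d\ge4$, where instead $s_D=d+1$.

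The exceptional family $n=3$, $\min d_i=2$ can be settled directly. Such a $v$ is a pencil of $d_2\times d_3$ matrices, so one applies the Kronecker canonical form. A generic pencil is regular with distinct eigenvalues when $d_2=d_3$, and consists only of rectangular (minimal-index) blocks when $d_2\ne d_3$; in the unbalanced case one checks that $\mu_v$ is surjective, so $N_D=0$, while in the balanced case $d_2=d_3=d$ the pencil is $\GL(V_1)$-equivalent to $(\mathrm{id},\Lambda)$ with $\Lambda$ diagonal with distinct entries, whose $G$-stabilizer is the product of the scalars in $\GL(V_1)$ with the diagonal torus in $\GL(V_2)$ (the torus in $\GL(V_3)$ then determined), of dimension $1+d$. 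Hence $N_D=\dim V-\dim G+(1+d)=2d^2-(4+2d^2)+(1+d)=d-3$ for $d\ge3$ and $N_D=0$ for $d=2$, which together with the permutation symmetry of $D$ is precisely \eqref{N_conjecture_2_d2_d3}.

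The heart of the matter is the remaining ``generic'' claim: for $n\ge4$, or $n=3$ with all $d_i\ge3$, $s_D$ equals the universal lower bound, i.e.\ a generic $v$ has $\mu_v$ onto (when $\dim V\le\sum_i d_i^2-n+1$) or $\ker\mu_v$ equal to the $(n-1)$-dimensional space of relative scalars (otherwise). Since $v\mapsto\dim\ker\mu_v$ is upper semicontinuous, it suffices to exhibit, for each admissible $D$, one witness $v_0$ realizing the minimal value, and then to verify it by solving the linear system $\sum_i\xi_i\cdot v_0=0$. For hypercubic $D=(d,\dots,d)$ the natural candidates are the $S_n$-symmetric states \eqref{v_combinations} of Conjecture~\ref{conjecture_hypercubic_systems_n}; the $n=3$ instance of this verification is exactly Theorem~\ref{theorem_hypercubic_systems_n_3}, and the lattice description of \eqref{v_combinations}---a corner of the cube together with all its $k$-dimensional diagonals---should keep the kernel computation systematic. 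For general hypercuboid $D$ one must first supply comparably explicit witnesses (the shapes of the tables in Section~\ref{section_hypercuboid_systems} suggest families built from generalized diagonals of the $d_1\times\cdots\times d_n$ lattice), or else run an induction on the largest dimension $d_n$, which is natural since $\dim V$ is linear and $\sum_i d_i^2$ is quadratic in $d_n$, matching the quadratic-in-$d_n$ shape of every table.

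This last step is the main obstacle, and it is genuinely hard: a complete proof would in particular establish Conjecture~\ref{conjecture_hypercubic_systems_n} for all $n$, which is currently open. The two sticking points are (i) assembling a \emph{uniform} supply of witness MESs over all admissible hypercuboid $D$, and (ii) carrying out $\sum_i\xi_i\cdot v_0=0$ in closed form---the system is linear, but no pattern is evident outside the $S_n$-symmetric case. A more conceptual substitute would be preferable: for instance degenerating a generic $v$ to a block-diagonal direct sum of smaller tensors and establishing sub/super-additivity of the stabilizer deficiency, or recasting the $G$-action on $V$ as a quiver or prehomogeneous-vector-space problem with computable generic stabilizers. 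One must also confirm that no pencil-type extra symmetry (as in the exceptional case) persists for $n\ge4$---expected, because a generic matrix space of dimension $\ge3$ is rigid, but this has to be certified within whichever witness family is used.
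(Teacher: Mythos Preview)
The statement you are addressing is labeled a \emph{conjecture} in the paper, and the paper does not prove it. The support offered there is entirely numerical: the tables in Sections~\ref{3_partite_examples}--\ref{5_partite_examples} exhibit the pattern, and Appendix~\ref{section_properties_of_the_principal_entanglement_invariants} derives recurrences and bounds \emph{assuming} formula~\eqref{N_conjecture} already holds. There is no argument in the paper to compare your attempt against.

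Your proposal goes well beyond what the paper provides. The identification of $N_D(v)$ with the codimension of the $G$-orbit is correct---for $n=3$ one can check directly that the defining equations of $K_{1,2,3}$ in Appendix~\ref{section_a_theorem_for_3_partite_hypercubic_systems} are precisely the kernel of the transpose of your infinitesimal orbit map $\mu_v$, so $\dim K_{1,2,3}=\dim\operatorname{coker}\mu_v$---and this immediately gives the inequality $N_D\ge\max\bigl(0,\prod_i d_i-\sum_i d_i^2+n-1\bigr)$ for \emph{all} $D$. Your Kronecker-pencil treatment of the exceptional family $n=3$, $\min d_i=2$ is a complete proof of~\eqref{N_conjecture_2_d2_d3}, which the paper never attempts.

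The genuine gap, which you name yourself, is the ``generic'' direction for $n\ge4$ and for $n=3$ with all $d_i\ge3$: showing that some witness tensor has stabilizer of dimension exactly $\max(n-1,\sum_i d_i^2-\prod_i d_i)$. You sketch several plausible strategies (symmetric witnesses, induction on $d_n$, degeneration, quiver methods) but carry none of them out, and you correctly observe that even the hypercubic subcase would already settle the open Conjecture~\ref{conjecture_hypercubic_systems_n}. So what you have is a partial proof---one inequality in full generality, the exceptional family in full, and a clean reduction of the remainder to a generic-stabilizer problem---which is strictly more than the paper offers, but still leaves the conjecture open.
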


We note that the quantity $N_{(d_1,\dotsc,d_n)}$ is a symmetric polynomial of the variables $d_1,\dotsc,d_n$ and prove some properties of $N_{(d_1,\dotsc,d_n)}$ in Appendix \ref{section_properties_of_the_principal_entanglement_invariants}.

According to Definition \ref{definition_typical_states} and Conjecture \ref{conjecture_typical_states_are_MESs}, if we randomly select values of coordinates $\{v_{j_1,\dotsc,j_n}\}$ uniformly from any finite region of $\C^{d_1\dotsb d_n}$ of dimension $d_1\dotsb d_n$, then the resulting state $v$ in \eqref{v} will almost surely be an MES for an $n$-partite system with the dimensions $D=(d_1,\dotsc,d_n)$.
Such a state has the maximum length, $L_D(v)=d_1\dotsb d_n$.

We can similarly take a number $l$, where $0\le l\le d_1\dotsb d_n$, and choose a set of $l$ coordinates among $\{v_{j_1,\dotsc,j_n}\}$.
There are $C(d_1\dotsb d_n,l)=(d_1\dotsb d_n)!(l!(d_1\dotsb d_n-l)!)^{-1}$ ways to make such a choice.
We set these chosen coordinates to be non-zero and randomly select their values uniformly among non-zero numbers from any finite region of $\C^{d_1\dotsb d_n}$ of dimension $d_1\dotsb d_n$.
The resulting state $v^{(l)}$ is a typical state with the length $L_D(v^{(l)})=l$.
If we exclude sets of values of measure zero, it does not matter what values from $\C^{d_1\dotsb d_n}$ are chosen for the non-zero coefficients of $v^{(l)}$, and each such state has the same set of invariants $\mathcal{N}_D(v^{(l)})$.
It follows that we need to generate only one random state $v^{(l)}$ for each $l$, which results in the total number of required states equal to $\sum_{l=0}^{d_1\dotsb d_n}C(d_1\dotsb d_n,l)=2^{d_1\dotsb d_n}$.

We can use the above direct method only for a limited set of systems since the number $2^{d_1\dotsb d_n}$ grows rapidly with the dimensions $d_1,\dotsc,d_n$.
For $D=(2,2,2)$, for example, there are $2^8=256$ possibilities, and the detailed information about distribution of these states among all entanglement classes according to their lengths is given in Figure \ref{figure_state_counts_by_length_and_class_222}.

 \begin{figure}[htpb]
    \centering
    \includegraphics{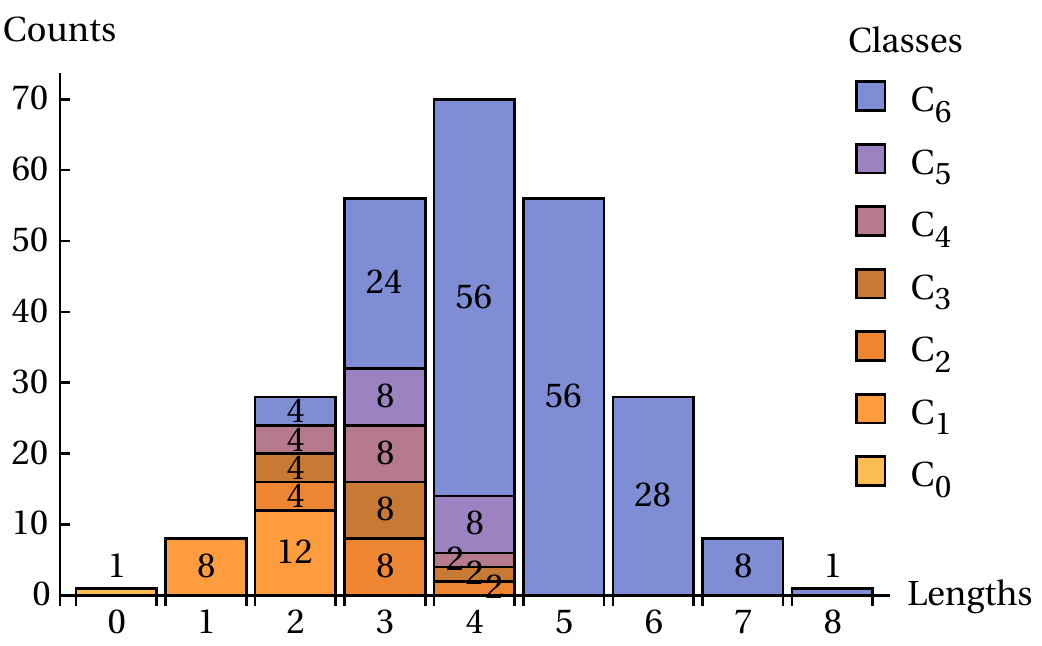}
    \caption{The distribution of all types of typical states for the system $D=(2,2,2)$ by their lengths and classes.
    The total number of types of states is $2^8=256$.
    The number of states with the length $L_D=l$ equals the binomial coefficient $C(8,l)=8!(l!(8-l)!)^{-1}$, but their distribution among the entanglement classes $C_0,\dotsc,C_6$ is less trivial.
    The distribution shifts from less entangled classes to more entangled classes as $L_D$ increases.
    In particular, the least entangled class $C_1$ requires $L_D\le 2$ and the most entangled class $C_6$ requires $L_D\ge 2$.
    The number of states of a given length that belong to a certain class is determined by permutations.
    For example, the $4$ states in the MEC $C_6$ of the shortest length $L_D=2$ are $v^{(2)}_k=\sum_{(j_1,j_2,j_3)\in J_k}v_{j_1,j_2,j_3}e_{1,j_1}\otimes e_{2,j_2}\otimes e_{3,j_3}$, where $1\le k\le 4$, $J_1=\{(1,1,1),(2,2,2)\}$, $J_2=\{(1,2,1),(2,1,2)\}$, $J_3=\{(2,1,1),(1,2,2)\}$, $J_4=\{(2,2,1),(1,1,2)\}$ and all $v_{j_1,j_2,j_3}$ specified by $\{J_k\}_{k=1}^4$ are arbitrary non-zero complex numbers.
    }
    \label{figure_state_counts_by_length_and_class_222}
\end{figure}

In the remainder of this subsection we look at the shortest lengths of states in MECs.
We list the results for several $3$-partite and $4$-partite systems in Tables \ref{table_shortest_length_3_partite} and \ref{table_shortest_length_4_partite}.
We give the exact values of $L_D$ for $3$-partite systems and only the upper bounds of $L_D$ for $4$-partite systems in view of computational challenges for the latter systems.

\begin{table}[htpb]
  \centering
  \begin{tabular}{ccc}
    \toprule
    $D$ & $N_D$ & $L_D$ \\
    \midrule
    $(2,2,2)$ &  $0$ &  $2$ \\
    $(2,2,3)$ &  $0$ &  $4$ \\
    $(2,2,4)$ &  $0$ &  $4$ \\
    $(2,2,5)$ &  $0$ &  $4$ \\
    $(2,3,3)$ &  $0$ &  $4$ \\
    $(2,3,4)$ &  $0$ &  $6$ \\
    $(2,3,5)$ &  $0$ &  $6$ \\
    $(2,4,4)$ &  $1$ &  $6$ \\
    $(2,4,5)$ &  $0$ &  $8$ \\
    $(2,5,5)$ &  $2$ &  $8$ \\
    $(3,3,3)$ &  $2$ &  $7$ \\
    $(3,3,4)$ &  $4$ &  $8$ \\
    $(3,3,5)$ &  $4$ &  $9$ \\
    $(3,4,4)$ &  $9$ &  $9$ \\
    $(3,4,5)$ & $12$ & $10$ \\
    $(3,5,5)$ & $18$ & $11$ \\
    $(4,4,4)$ & $18$ & $10$ \\
    $(4,4,5)$ & $25$ & $11$ \\
    $(4,5,5)$ & $36$ & $12$ \\
    $(5,5,5)$ & $52$ & $13$ \\
    \bottomrule
  \end{tabular}
  \caption{The principal invariants $N_D$ and the shortest lengths $L_D$ of states in MECs for various $D$ with $n=3$.}
\label{table_shortest_length_3_partite}
\end{table}

\begin{table}[htpb]
  \centering
  \addtolength{\tabcolsep}{4pt}    
  \begin{tabular}{ccl}
    \toprule
    $D$ & $N_D$ & $L_D$ \\
    \midrule
    $(2,2,2,2)$ &  $3$ & $6$ 
    \\
    $(2,2,2,3)$ &  $6$ & $8$ 
    \\
    $(2,2,2,4)$ &  $7$ & $\le 8$ 
    \\
    $(2,2,3,3)$ & $13$ & $\le 9$ 
    \\
    $(2,2,3,4)$ & $18$ & $\le 10$ 
    \\
    $(2,2,4,4)$ & $27$ & $\le 12$ 
    \\
    $(2,3,3,3)$ & $26$ & $\le 13$ \\
    $(2,3,3,4)$ & $37$ & $\le 14$ \\
    $(2,3,4,4)$ & $54$ & $\le 15$ \\
    $(2,4,4,4)$ & $79$ & $\le 17$ \\
    $(3,3,3,3)$ & $48$ & $\le 17$ \\
    $(3,3,3,4)$ & $68$ & $\le 21$ \\
    $(3,3,4,4)$ & $97$ & $\le 26$ \\
    $(3,4,4,4)$ & $138$ & $\le 31$ \\
    $(4,4,4,4)$ & $195$ & $\le 44$ \\
    \bottomrule
  \end{tabular}
  \addtolength{\tabcolsep}{-2pt}    
  \caption{The principal invariant $N_D$ and the shortest lengths $L_D$ of states in MECs for various $D$ with $n=4$.
  In this table, all possibilities for states $v^{(l)}$ with $L_D(v^{(l)})=l$ for the first two cases were considered.
  Since the number of required states $2^{d_1 d_2 d_3 d_4}$ grows rapidly with the dimensions $d_1,d_2,d_3,d_4$, only the Monte Carlo methods could be used for all other cases and such computations can only provide upper bounds.
  The next four examples were studied with $10^5$ samples each, and the remaining cases were studied with fewer samples.}
\label{table_shortest_length_4_partite}
\end{table}

\pagebreak

The results in Table \ref{table_shortest_length_3_partite} lead to the following conjecture.
\begin{conjecture}
The shortest length of states in an MEC for $n=3$ is $L_{(2,d_2,d_2)}=2d_2-2$, $L_{(2,d_2,d_3)}=2d_2$ for $d_2<d_3$ and
\begin{align}
    L_{(d_1,d_2,d_3)} =\sum_{i=1}^3 d_i-2
    \label{length_conjecture}
\end{align}
for $\min{(d_1,d_2,d_3)}\ge 3$.
\end{conjecture}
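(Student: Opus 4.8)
\emph{Proof strategy.} The claim is symmetric under permutations of $(d_1,d_2,d_3)$, so assume $d_1\le d_2\le d_3$ and split into three regimes: $d_1=2$ with $d_2=d_3$; $d_1=2$ with $d_2<d_3$; and $d_1\ge 3$. For $n=3$ the set $\mathcal N_D$ has four members: the $2$-partite invariants $n_{D,1},n_{D,2},n_{D,3}$ and the principal invariant $N_D$. Writing a state $v$ as a tensor with support $\Omega\subseteq[d_1]\times[d_2]\times[d_3]$, membership in the MEC means (a) each flattening $\phi_i\colon V_i\to V_j\otimes V_k$ has the maximal rank $\min(d_i,d_jd_k)$ (vanishing of the $n_{D,i}$) and (b) $N_D(v)$ equals the minimal value of Conjecture~\ref{conjecture_N}. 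I would use the concrete descriptions of $N_D$ from \cite{Buniy:2010zp}: for $d_1=2$, writing $v=e_{1,1}\otimes A+e_{1,2}\otimes B$, $N_D$ is the number of moduli of the matrix pencil $(A,B)$; for $d_1\ge 3$, $N_D$ measures the failure of the $G$-orbit of $v$ to attain its expected dimension. Then $L_D=\min\{L(v):v\text{ in the MEC}\}$, and one must prove matching upper and lower bounds.

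\emph{Upper bound.} In each regime I would exhibit an explicit family attaining the claimed length. For $d_1=2$, $d_2<d_3$: take $(A,B)$ in Kronecker canonical form, a direct sum of $d_3-d_2$ minimal-index rectangular blocks with sizes summing to $d_2$; this pencil has exactly $2d_2$ nonzero entries, is the generic pencil of its format and hence has no moduli ($N_D=0$), and its flattening ranks are read off the block form. For $d_1=2$, $d_2=d_3=:d$: start from the diagonal pencil $(\operatorname{diag}(\lambda_1,\dots,\lambda_d),I)$ with distinct $\lambda_l$, which lies in the MEC since $d$ distinct eigenvalues realise the generic $d-3$ moduli, then use the $\operatorname{GL}_2$ freedom on $V_1$ to pass to $(\alpha A+\beta B,\gamma A+\delta B)$ with $-\beta/\alpha=\lambda_1$ and $-\delta/\gamma=\lambda_2$; this kills one diagonal entry of each slice and yields length $2(d-1)=2d-2$. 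For $d_1\ge 3$: take the ``caterpillar'' support consisting of the $d_1$ points $(l,l,l)$, together with $d_2-1$ points extending the $V_2$-range and $d_3-1$ extending the $V_3$-range, arranged so the induced tripartite hypergraph is connected, giving $d_1+d_2+d_3-2$ points. Checking the flattening ranks is routine (surjectivity of the coordinate projections plus a triangular argument for independence of the slices); the one substantive step is verifying that the generic member of the family has the orbit dimension predicted by \eqref{N_conjecture}, which I would do by exhibiting enough independent tangent vectors $X_i\cdot v$ with $X_i\in\mathfrak{gl}(V_i)$.

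\emph{Lower bound --- the core.} Let $v$ be any state in the MEC with support $\Omega$, $L=|\Omega|$. From (a), each projection $\pi_i\colon\Omega\to[d_i]$ is surjective whenever $d_i\le d_jd_k$, and the $d_i$ slices of $v$ in the $i$th direction are linearly independent; this already gives easy bounds such as $L\ge(\text{number of distinct }(j,k)\text{-pairs in }\Omega)\ge\operatorname{rank}\phi_i$. The gap to the claimed value must be closed using (b). For $d_1=2$ the plan is to argue with the pencil $(A,B)$ directly: specialising $\alpha A+\beta B$ at a generic $\lambda_0$ makes it invertible, forcing at least $d_2$ nonzeros among the positions where $A$ or $B$ is nonzero; and the hypothesis that the pencil has the \emph{minimal} number of moduli (the $d_2$ ``eigenvalues'' are distinct points of $\mathbb P^1$, or the pencil has the generic rectangular Kronecker type) should prevent two such specialisations from sharing their zero pattern, pushing the count up to $2d_2$, with the improvement to $2d_2-2$ in the square case coming from the fact that distinct eigenvalues let each of $A$ and $B$ separately drop below $d_2$ nonzeros by at most one. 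For $d_1\ge 3$ I would induct on $d_1$: minimality of $N_D$ should rule out any slice decomposition that trivialises the $V_1$-direction (so the problem does not reduce to $d_1-1$ at the same length), and combining a $d_1=2$-type analysis of the $V_1$-slices with the flattening-rank bounds in the other two directions should yield $L\ge d_1+d_2+d_3-2$.

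\emph{Main obstacle.} The hard part is making the passage from ``$N_D$ minimal'' to a usable combinatorial constraint on $\Omega$, because MEC states can have very degenerate support: already for $(2,d,d)$ the state $\sum_{l=1}^d(\lambda_l e_{1,1}+e_{1,2})\otimes e_{2,l}\otimes e_{3,l}$ is in the MEC, has length $2d$, and is a direct sum of $d$ one-dimensional pencils, so the support can be arbitrarily disconnected across pairs of factors. Hence the bound cannot come from a connectivity argument or a naive tangent-space estimate --- the inequality $\dim(G\cdot v)\le|\mathrm{Cyl}(\Omega)|$, for instance, loses a multiplicative constant --- and one must exploit the finer structure (the Kronecker normal form when $d_1=2$, a careful induction when $d_1\ge 3$) to recover exactly the right count. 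Isolating the sharp inequality that simultaneously lower-bounds $L$ and controls the principal invariant is the crux; the rest is bookkeeping.
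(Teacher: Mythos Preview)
The paper does not prove this statement: it is presented as a \emph{conjecture}, extracted from the numerical data in Table~\ref{table_shortest_length_3_partite}, and no argument (even a sketch) is offered. There is therefore no ``paper's own proof'' to compare your proposal against.

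As for the proposal itself, the upper-bound constructions are reasonable and largely sound: the $\GL_2$ trick on a diagonal pencil with distinct eigenvalues does kill two entries and stays in the MEC (since $\GL_2\subset G$ preserves classes), and generic Kronecker blocks of a $d_2\times d_3$ pencil with $d_2<d_3$ do carry exactly $2d_2$ nonzeros with no continuous moduli. The caterpillar state for $d_1\ge 3$ is plausible, though your verification that it lies in the MEC leans on the formula of Conjecture~\ref{conjecture_N}, which is itself unproven in the paper; you would need either an independent orbit-dimension calculation for a \emph{generic} state (to certify what the minimum actually is) or a direct comparison of all four invariants with those of a random state.

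The genuine gap is the lower bound, and you have correctly located it in your ``Main obstacle'' paragraph without resolving it. Phrases like ``should prevent two such specialisations from sharing their zero pattern'' and ``minimality of $N_D$ should rule out any slice decomposition'' are not arguments; they are restatements of what must be shown. In particular, your own counterexample---the length-$2d$ direct-sum pencil for $(2,d,d)$---already defeats any connectivity or naive tangent-space bound, and you offer no replacement mechanism that would yield the sharp inequality $L\ge 2d-2$ (let alone $L\ge d_1+d_2+d_3-2$ for $d_1\ge 3$). Until you can convert ``$N_D$ is minimal'' into a concrete combinatorial constraint on the support $\Omega$ that forces the extra nonzeros, the lower bound remains open, and with it the conjecture.
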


It would be interesting to generalize this conjecture to $n\ge 4$, but at present we do not have sufficient numerical data to carry this out.

\section{A Random Walk Through the Space of Entanglement Classes}
\label{section_a_random_walk_through_the_space_of_entanglement_classes}

Hard scattering of short wavelength particles off a quantum system can drive that system toward decoherence, while more gentle scattering of long wavelength particles can drive the system toward coherence \cite{Jacob:2021}. If we consider scattering of particles one by one, then it generates something similar to a random walk (RW), where the high frequency case  is closely related to Brownian motion. Consideration of classical random walks are sufficient for our purposes here, although quantum random walks \cite{KempeQRW,VenegasQW} used for search algorithms by a quantum computer could be employed in a more sophisticated approach.
 
We think of our RWs taking place in the Born-Markov (BM) approximation, which makes the following three assumptions.
First, we assume that the environment (E) is much larger than the system (S) and the coupling between them is weak, so that the density matrix can be written in the factorized form $\rho(t)=\rho_\text{S}(t)\otimes \rho_\text{E}$.
The large environment assumption means the interactions with S have little effect on E, so that $\rho_\text{E} \approx \text{const}$.
Second, self-correlations within E created by interactions with S decay rapidly, i.e., there is no long-term memory in E over time scales which are comparable to the time scale for S to change.
Finally, we  assume that the interaction Hamiltonian can be written in the form $H_\text{int}=H_{\text{int},\text{S}}\otimes H_{\text{int},\text{E}}$.
These assumptions lead to the BM master equation for the evolution of $\rho(t)$.
We do not need to analyze this equation directly as we mention it here only to justify our Monte Carlo calculations.
In our RWs we follow only pure states in $\text{S}$, so if $\rho_\text{S}(t)$ were to evolve into a mixed state,
then  we in effect follow each component, as these travel along the paths shown in the following figures and lead to the same conclusions as to how states reach MECs.
 
Let us consider random walks of states in our system through the space of its entanglement classes to investigate how distant unentangled states are from the MEC.
A tripartite case $D=(d_1,d_2,d_3)$ is sufficiently rich in various features of these random walks and so we restrict our attention to this case.
A general state \eqref{v} has the form
\begin{align}
v=\sum_{j_1=1}^{d_1} \sum_{j_2=1}^{d_2} \sum_{j_3=1}^{d_3} v_{j_1,j_2,j_3} e_{1,j_1}\otimes e_{2,j_2}\otimes e_{3,j_3}.
\label{v_2_2_2}
\end{align}
We begin at the vacuum state $v^{(0)}=0$ and generate a sequence $(v^{(0)},v^{(1)}\dotsc,v^{(d_1 d_2 d_3)})$ of states in a random walk such that the state $v^{(l+1)}$ is obtained from $v^{(l)}$ by setting an arbitrary zero coefficient $v_{j_1,j_2,j_3}$ in $v^{(l)}$ to an arbitrary non-zero value.
The general form of the state $v^{(l)}$ in the sequence is
\begin{align}
v^{(l)}=\sum_{(j_1,j_2,j_3)\in D_3^{(l)}} v_{j_1,j_2,j_3} e_{1,j_1}\otimes e_{2,j_2}\otimes e_{3,j_3},\label{v_l}
\end{align}
where
\begin{align}
v_{j_1,j_2,j_3}\not=0, \ (j_1,j_2,j_3)\in D_3^{(l)}
\end{align}
and $D_3^{(l)}$ is a set of any $l$ distinct elements of the set
\begin{align}
D_3=\{1,\dotsc,d_1\}\times\{1,\dotsc,d_2\}\times\{1,\dotsc,d_3\}.
\end{align}

There are interesting questions about an ensemble of such random walks.
For example, although it is clear that the last element in the sequence, $v^{(d_1 d_2 d_3)}$, is an MES, we might reach an MES much earlier, and so we ask how many steps it takes to get from $v^{(0)}$ to the first MES in the sequence.

Corresponding to a sequence of states $(v^{(0)},v^{(1)},\dotsc,v^{(d_1 d_2 d_3)})$, there is a sequence of entanglement classes $(C^{(0)},C^{(1)},\dotsc,C^{(d_1 d_2 d_3)})$ such that $v^{(l)}\in C^{(l)}$.
Although states in their sequence are all distinct by construction, some classes in their sequence can coincide.
For example, for the states
\begin{align}
&v^{(1)}=v_{j_1,j_2,j_3} e_{1,j_1}\otimes e_{2,j_2}\otimes e_{3,j_3}, \\
&v^{(2)}=v_{j_1,j_2,j_3} e_{1,j_1}\otimes e_{2,j_2}\otimes e_{3,j_3} +v_{k_1,j_2,j_3} e_{1,k_1}\otimes e_{2,i_2}\otimes e_{3,i_3}
\end{align}
with $j_1\not=k_1$, we have $C^{(1)}=C^{(2)}$ since $v^{(1)}$ and $v^{(2)}$ are both unentangled, as can be seen by the basis redefinition
\begin{align}
&v^{(2)}=e'_{1,j_1}\otimes e_{2,j_2}\otimes e_{3,j_3}, \\
&e'_{1,j_1}=v_{j_1,j_2,j_3} e_{1,j_1}+v_{k_1,j_2,j_3} e_{1,k_1}.
\end{align}
Also, if $C^{(l)}$ and $C^{(l+1)}$ are distinct, then the class $C^{(l)}$ cannot appear in the sequence after $C^{(l+1)}$.
Equivalently, for the ordering of the entanglement classes generated by our random walk procedure (for which, by the way, the canonical choice does not exist), the sequence of entanglement classes generated in an RW is a monotonically non-decreasing sequence.
Consequently, if $C^{(l)}$ is the MEC, then any $C^{(m)}$ with $m>l$ is also the MEC, and so $C^{(m)}=C^{(l)}$ for all $m>l$.
This means that from the point of view of sequences of classes, RWs terminate at the MECs.
However, before the sequence reaches the MEC, it visits other classes, and not necessarily all of them.
This leads us to another question regarding an ensemble of such random walks, namely how often do they visit various entanglement classes on their way from $C^{(0)}$ to the MEC?

For specific results of RWs, consider first the system $D=(2,2,2)$.
Generating 10,000 RW examples, we find the pathways shown in Figure \ref{figure_traffic_pattern_tripartite}, where the individual paths connecting classes (which we call sub-paths) are for single steps in the RW.
Not all entanglement classes are connected by a single step, and there are many possible paths from the vacuum class $C_0$ to the MEC $C_6$.
Some sub-paths are used more frequently than others, giving an indication of how likely it is for an RW to reach a certain class.
The sub-paths that loop back to the same class are those where turning on a new coefficient does not change the class.
The probabilities of RWs taking various sub-paths are shown in the figure.
If there is no path connecting any two given classes, then such a transition is not possible and the associated probability is zero.

We note that all the information available in Figure \ref{figure_traffic_pattern_tripartite} is also given by the stochastic matrix $P$ in Eq. \eqref{P_Q_2_2_2} of Appendix \ref{section_markov_chains}.
Such matrices appear when transitions between the entanglement classes are viewed as transitions between events in Markov chains.
In our specific case, for each $D=(d_1,d_2,d_3)$ system we have an absorbing Markov chain with the MEC being its only absorbing class.
See Appendix \ref{section_markov_chains} for more details.

\begin{figure}[htpb]
\includegraphics[scale=0.4]{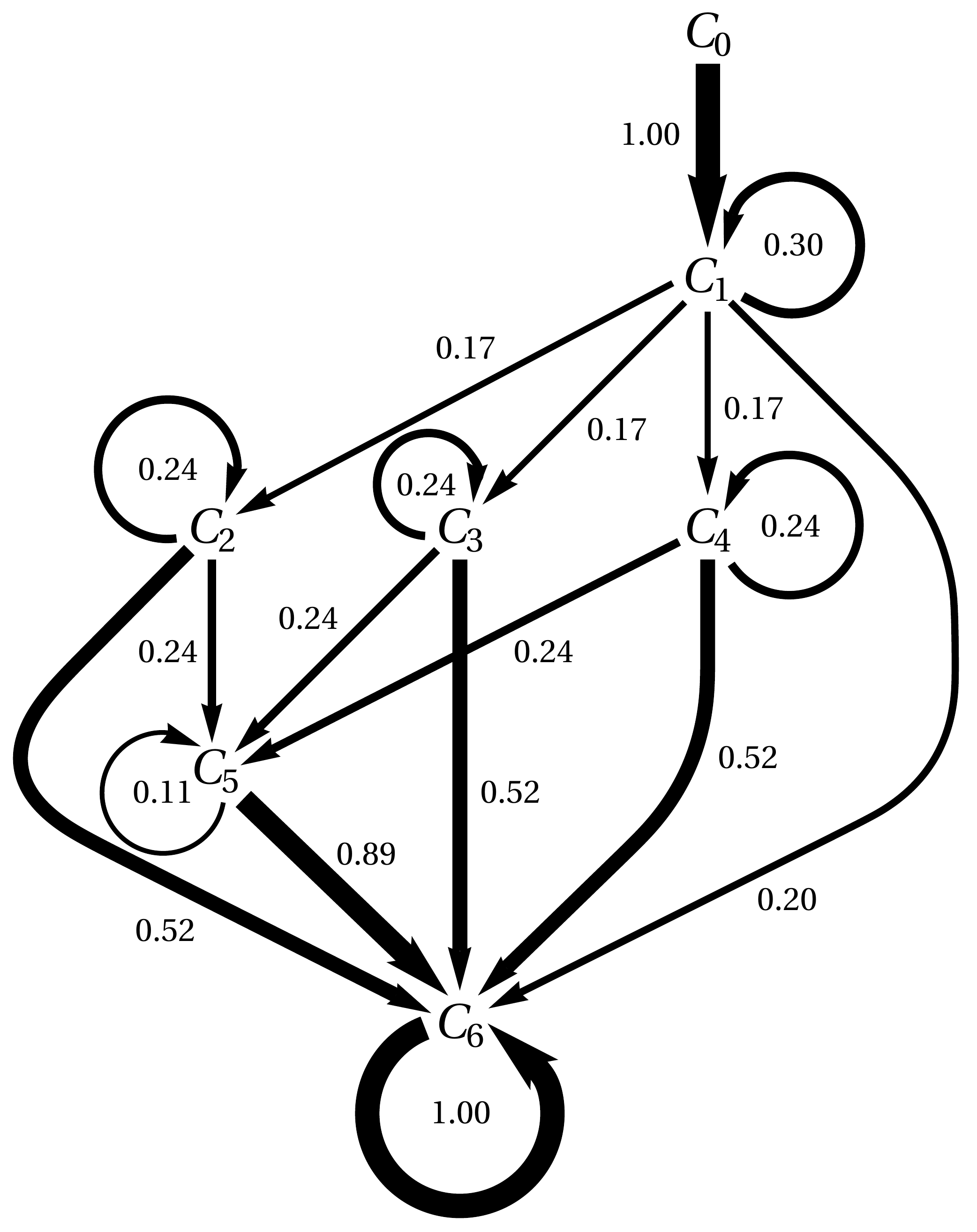}%
\caption{The traffic pattern for random walks from the vacuum class $C_0$ to the MEC $C_6$ for the tripartite system $D=(2,2,2)$.
The numbers next to arrows mean the probabilities of all random walks to take the paths indicated by the arrows.
The same notation is also used for the paths that do not change classes and are indicated as self-loops.
Once the self-loops are included, the sum of the probabilities for arrows leaving any class equals one (subject to small variations due to rounding of numbers).
If there is no path connecting any two given classes, then such a transition is not possible and the associated probability is zero.
All the information available in this figure is also given by the stochastic matrix $P$ in \eqref{P_Q_2_2_2}.
Such a matrix gives transition probabilities between events in a Markov chain that corresponds to random walks between entanglement classes. 
}
\label{figure_traffic_pattern_tripartite}
\end{figure}

\subsection{Random Walk Examples}
\label{section_random_walk_examples}

Figures similar to Figure \ref{figure_traffic_pattern_tripartite} for larger systems are rather cluttered, and so in these cases we instead look at heat maps for their evolution.
Figure \ref{figure_forward_heat_map} shows such a heat map for the system $D=(3,3,3)$ and includes flow patterns.
The heat map displays the number of RWs where a class is reached at a certain step of the process of turning coefficients on, with colors changing from black (for less than four RWs) to white (for all $2\times 10^5$ RWs).
An arrow shows the average of all taken paths from a given step to the next one, with its length indicating the change in class.
Some details (similar to those in Figure \ref{figure_traffic_pattern_tripartite}) are omitted for clarity, allowing the overall pattern to be easily seen.
Again, certain classes are more or less likely to be hit in any RW.
We can extract the lengths of the RWs to reach the MEC, and do this for many other systems as well. 
While RWs that travel from the least to the most entangled class can be thought of as heating the system by scattering high energy particles, we can also think of cooling the system.
This could occur via scattering of low-frequency photons similar to laser cooling.
We can model this situation by simply turning off coefficients one by one, which reverses the heating process.
However, there is more than one way to do this, e.g., we can start with all coefficients on and turn them off one by one, but we could also start with a data set of initial states that are the set of end points of the heating RWs.
These have only a non-maximum number ($<d_1\dotsb d_n$) of coefficients turned on.
However, a return path, i.e., a cooling path, can be different from the heating path that generated its initial conditions, although the number of steps in both directions has to coincide.
Figure \ref{figure_return_heat_map} shows such a return heat map based on the end points reached in Figure \ref{figure_forward_heat_map}.

\begin{figure}[htpb]%
    \centering
    \subfloat[\label{figure_forward_heat_map}Forward heat map.\hspace*{-3em}]
    {\includegraphics{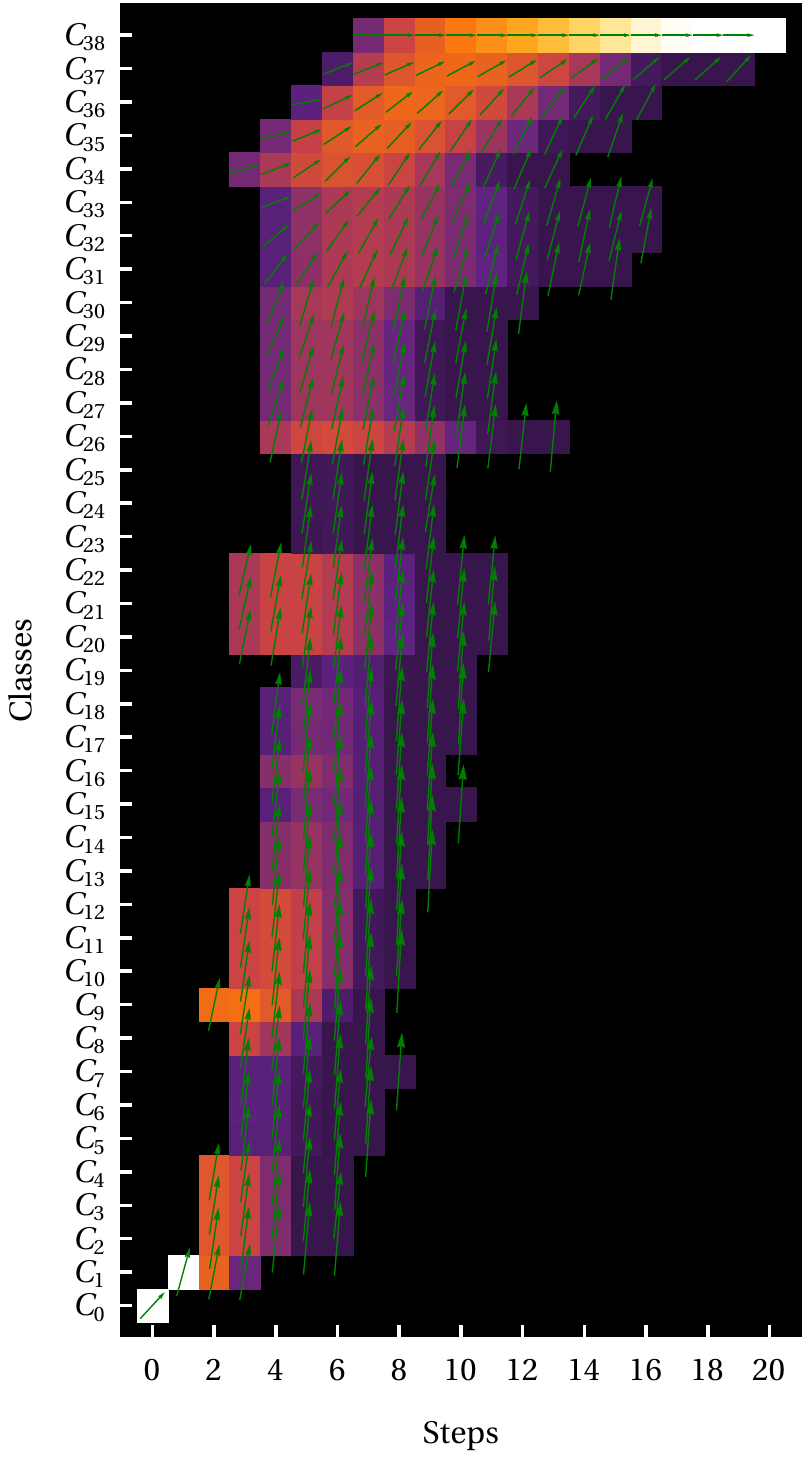}}%
    \hfill
    \subfloat[\label{figure_return_heat_map}Return heat map.\hspace*{3.5em}]
    {\includegraphics{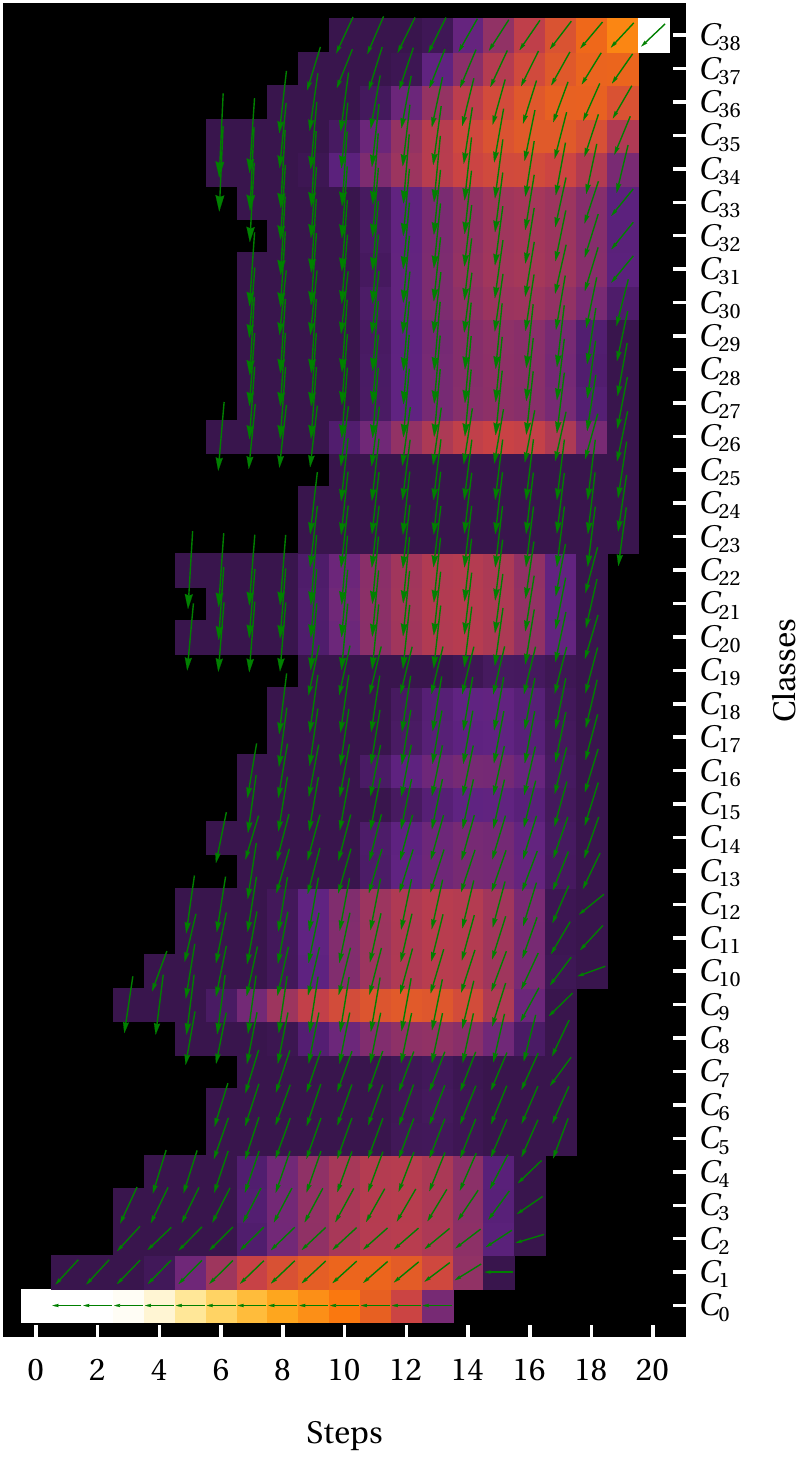}}%
    \caption{Heat maps for RWs between the vacuum class $C_0$ and the MEC $C_{(3,3,3)}=C_{38}$ for the tripartite system $D=(3,3,3)$ with $2\times10^{5}$ samples and four hits minimum.}%
\end{figure}

\begin{table}[htpb]
  \centering
  \addtolength{\tabcolsep}{4pt}    
  \begin{tabular}{crrccc}
    \toprule
    $D$ & $\smallabs{\mathcal{C}_D}$ & $\mathcal{N}_D$ & $Q_0$ & $\tilde{Q}_0$ & $L_D$ \\
    \midrule
    $(2,2,2)$ &$7$       & $(0,0,0,0)$    & $3.6$  &$3$  & $2$ \\
    $(3,3,3)$ &$39$      & $(0,0,0,2)$    & $11.0$ &$7$  & $7$ \\
    $(4,4,4)$ &$\ge 166$ & $(0,0,0,18)$   & $13.9$ &$11$ & $10$ \\
    $(5,5,5)$ &$\ge 407$ & $(0,0,0,52)$   & $18.3$ &$15$ & $13$ \\
    $(6,6,6)$ &$\ge 759$ & $(0,0,0,110)$  & $22.9$ &$19$ & $16$ \\
     $(7,7,7)$ &$ ?$ & $(0,0,0,198)$   & $?$ &$23$ & $19$ \\
    $(d,d,d)$ & ?        & $(0,0,0,N_{(d,d,d)})$ & $\approx(4d-2)$&$4d-5$ & $\approx(3d-2)$ \\
    \bottomrule
  \end{tabular}
  \caption{Several properties of the MECs of tripartite systems with the dimensions  $D=(d,d,d)$, including the numbers of entanglement classes $\smallabs{\mathcal{C}_D}$ (either as exact values or lower bounds), the full sets of invariants $\mathcal{N}_D$, the approximate average numbers of steps $Q_0$ from the vacuum class to the MECs, the numbers of steps $\tilde{Q}_0$ from the vacuum class to the MECs for the $S_3$-symmetric states \eqref{v_s_3}, and the minimum lengths $L_D$ of MESs.
  The value of the principal invariant for the MEC is $N_{(d,d,d)}=d^3-3d^2+2$ for $d\ge 3$.
  The question marks for several entries indicate that we do not have reliable ways to compute the required quantities.}
  \label{table_MEC_properties_ddd}
\end{table}

\begin{table}[htpb]
  \begin{center}
    \begin{tabular}{ccr}
      \toprule
      Class & $\mathcal{N}_{(2,2,2)}$ & Count \\
      \midrule
      $C_0$ & $(8,8,8,8)$ & $10,000$ \\
      $C_1$ & $(4,4,4,4)$ & $14,256$ \\
      $C_2$ & $(0,0,4,3)$ & $3,236$ \\
      $C_3$ & $(0,4,0,3)$ & $3,064$ \\
      $C_4$ & $(4,0,0,3)$ & $3,207$ \\
      $C_5$ & $(0,0,0,1)$ & $2,560$ \\
      $C_6$ & $(0,0,0,0)$ & $10,000$ \\
      \bottomrule
    \end{tabular}
  \end{center}
\label{table_222_rw_counts}
\caption{The number of classes which are visited in $10,000$ random walks from the unentangled class to the MEC for the tripartite system $D=(2,2,2)$.
The value of the principle invariant $N_{(2,2,2)}$ is the fourth number in the set $\mathcal{N}_{(2,2,2)}$.}
\end{table}%

 \begin{figure}[htpb]
    \centering
    \includegraphics{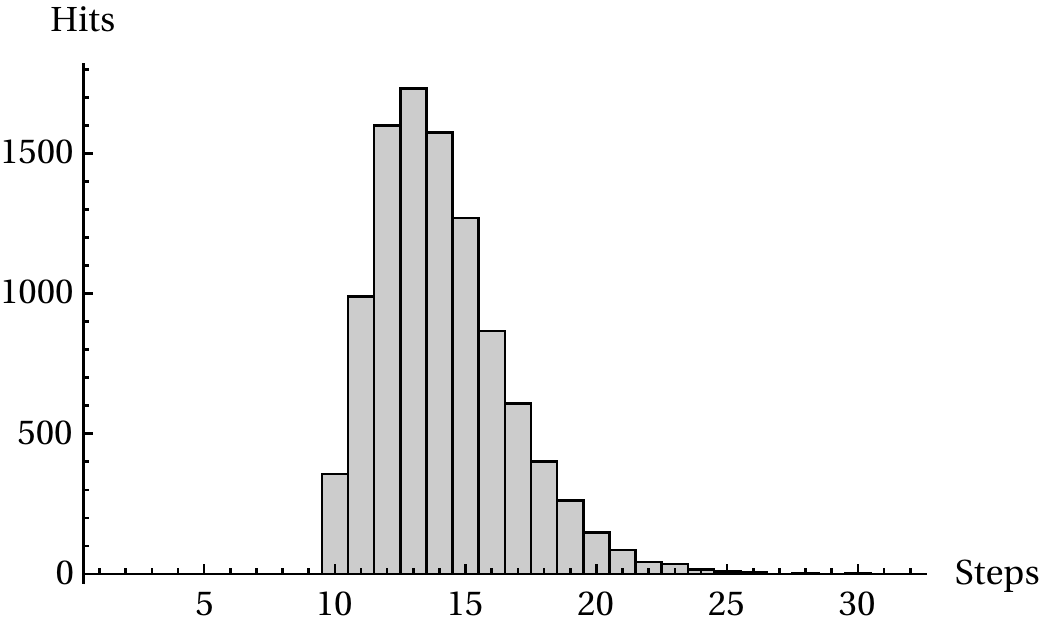}
    \caption{A histogram of the number of steps needed to reach the MEC for the system $D=(4,4,4)$ for $10,000$ random walks.
    The maximum is at 13 steps and the average number of steps is approximately $14$.
    The smallest number of steps is $10$, which agrees with Conjecture~\ref{conjecture_N} and Table \ref{table_MEC_properties_ddd}.}
    \label{figure_random_walks_histogram_444}
\end{figure}

We collected several properties of the MECs of tripartite systems with the dimensions $D=(d,d,d)$ in Table \ref{table_MEC_properties_ddd}.
We included the numbers of entanglement classes $\smallabs{\mathcal{C}_D}$ (either as exact values or lower bounds), the full sets of invariants $\mathcal{N}_D$, the approximate average numbers of steps $Q_0$ from the vacuum class to the MECs, the numbers of steps $\tilde{Q}_0$ from the vacuum class to the MECs for the $S_3$-symmetric states \eqref{v_s_3}, and the minimum lengths $L_D$ of MESs.
We were not able to reliably compute several of these quantities and we indicated them with the question marks in the table. 

We have found that the values of $Q_0$, $\tilde{Q}_0$, $L_D$ grow approximately linearly with $d$ for $D=(d,d,d)$.
Figure \ref{figure_random_walks_histogram_444} shows a histogram of the number of steps needed to reach the MEC for 10,000 RWs for the larger system $D=(4,4,4)$.
It exhibits the minimum of $10$ and the average of about $14$ steps in accordance with Table \ref{table_MEC_properties_ddd}.  
This can be compared to Figure \ref{444MES} which implies that the length of a symmetric state is $4d-3$ and contrasted with the length of the minimal RW. Hence the symmetric representation of an MEC as given by Figure \ref{444MES} is not atypical of the number of terms needed to reach the MEC via a RW, and this number grows far slower than the total number of possible nonzero coefficients for a state in the $D=(d,d,d)$ system. For an increasing system size this can be seen from a decreasing average fraction of the above number of non-zero coefficients to reach the MEC and the total number of coefficients. This is visualized in Figure \ref{figure_MEC_nonzero_coefficient_fraction_heat_maps} for the systems $D=(2,d_2,d_3)$ and $D=(3,d_2,d_3)$ with $1\leq d_2,d_3\leq 6$.  

\begin{figure}[htpb]
    \centering
    \subfloat[$D=(2,d_2,d_3)$.\hspace*{-20pt}]{\includegraphics{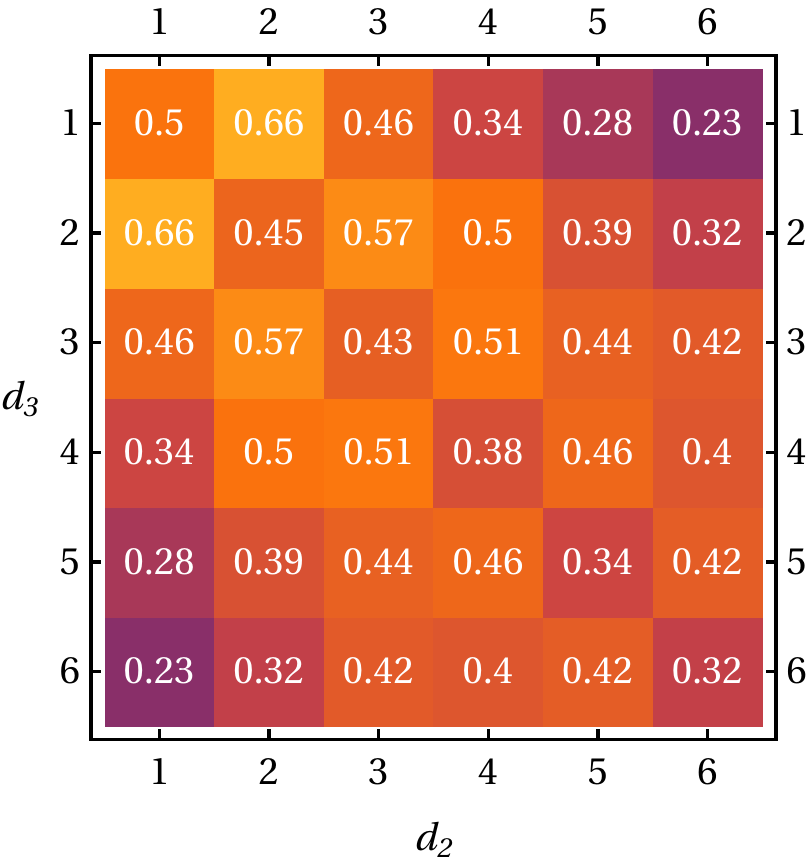}}%
    \hspace{-7.5pt}%
    \subfloat[$D=(3,d_2,d_3)$.\hspace*{20pt}]{\includegraphics{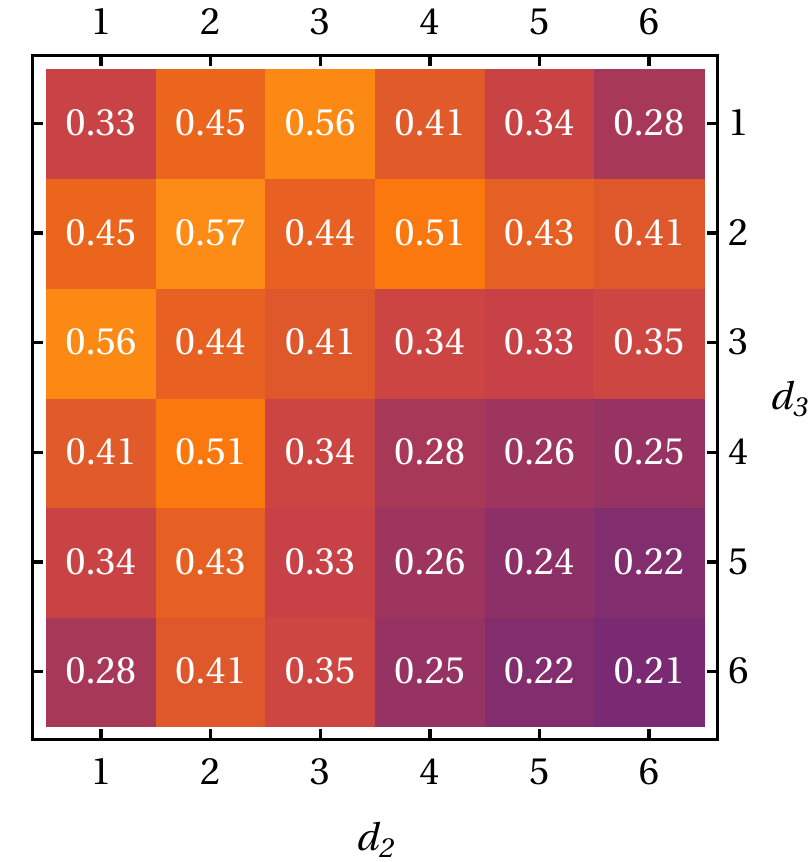}}%
    \caption{The average fraction of non-zero coefficients of the MEC reached in $1000$ random walks for the systems $D=(2,d_2,d_3)$ and $D=(3,d_2,d_3)$ with $1\leq d_2,d_3\leq 6$.}
    \label{figure_MEC_nonzero_coefficient_fraction_heat_maps}
\end{figure}

\section{Discussion and Conclusions}
\label{section_discussion_and_conclusions}

Quantum devices (including quantum computers) depend on maintaining quantum systems in specific quantum states (usually specific entangled states), and then being able to manipulate those states.
This is in general a daunting task as unwanted interactions of the system with its environment tend to cause decoherence of the desired states.
Decoherence effects can be fought by shielding and super cooling  the system, and through quantum error correction, but the difficulty grows with the system size.

The results given here can be used in estimates of decoherence, and specifically in estimating the step-wise distance  from unentangled to most entangled classes.
This quantity turns out to grow approximately linearly with system size, while the number of classes grows much faster.
We hope this type of information will be useful for design and operation of quantum computers and other quantum devices and for a general understanding of quantum behavior.
The method that we have described can be easily generalized to find the distance between any two classes of an arbitrary quantum system.

There are examples in particle physics where the decay of a particle leads naturally to a specific entanglement.
If the decay takes place in a good vacuum, then entanglement can be maintained until the decay products interact with the measurement apparatus.
Entanglement in particle scattering and decays has been of interest for some time  \cite{ALP01,Bertlmann06,Bramon07,Bernabeu11,Banerjee15,Peschanski16,Kharzeev17,Cervera-Lierta:2017tdt,CerveraLierta:2019ejf} with one focus being on testing Bell-type inequalities \cite{Benatti97,Bertlmann01,Bertlmann01bis}.
These studies have typically used or only needed a more limited classification of quantum states from what we have presented here.
Hence, we anticipate being able to exploit our more refined classification to explore more complicated decay and scattering processes.
Entanglement has also been applied to astrophysics and cosmology \cite{Maldacena15,Berera2020,Berera2021xqa}.

Multipartite quantum states called ``absolutely maximally entangled states,'' have been defined, which are multipartite maximally entangled with respect to any possible bipartition \cite{HC13}.
Entangled $n$–qudits states via orthogonal arrays and combinatorial designs  \cite{Goyeneche14,Seveso18,Goyeneche14} and other highly entangled states have also been studied by a number of other authors \cite{Arnaud13,Brown05,Facchi09,Vicente13,Cervera-Lierta:2018inm,Cervera-Lierta:2019}, but we caution the reader that in some cases maximally entangled means only with respect to bipartition, and so differs from our definition of MES.

Pointer states are robust in the sense that they are environmentally super selected states that correspond to physical quantities that can be read off a dial in an experimental apparatus \cite{Zurek:1982ii}.
As we can see from our random walks, the most entangled class is robust in that once obtained it resists change due to further environmental perturbation.
Hence, we can call the MEC a pointer class.
The MEC is a set of quasiclassical states that occupies a decoherence-free subspace (DFS) of the Hilbert space of the system and could potentially be used for quantum computation.
For a simple related example see the discussion of the spin-boson model in \cite{Schlosshauer} and references therein.
    
Importantly, we have shown that it is possible to generate short symmetric MESs in the MEC for a number of systems and to use them as a tool in exploring  random walks through class space and other details of the MEC.
The short $S_3$ symmetric states we construct require only $4d-5$ terms for the systems $D=(d,d,d)$, while the  number of terms in a typical random member of the MEC is $d^3$.
Due to their small size, we expect our compact states to find use in many other situations.
We have also suggested how to extend our construction to other systems. 

The importance of multipartite entanglement cannot be overemphasized \cite{Kuwahara:2021mkh}, and we expect it to be ever more important as the field of quantum information progresses.

\appendix

\section{A Theorem for the $\bm{S_3}$-symmetric states of 3-partite Hypercubic Systems}
\label{section_a_theorem_for_3_partite_hypercubic_systems}

\begin{theorem}
  For the 3-partite system with the dimensions $D=(d,d,d)$, where $d\ge 2$, the $S_3$-symmetric state
  \begin{align}
    v =\sum_{l=1}^d e_{1,l}\otimes e_{2,l}\otimes e_{3,l} +\sum_{l=2}^d (e_{1,1}\otimes e_{2,l}\otimes e_{3,l} +e_{1,l}\otimes e_{2,1}\otimes e_{3,l} +e_{1,l}\otimes e_{2,l}\otimes e_{3,1})
    \label{}
  \end{align}
  is an MES and it  has the algebraic invariants $\mathcal{N}_{(d,d,d)}=(0,0,0,N_{(d,d,d)})$, where the principal invariant is
  \begin{align}
  N_{(d,d,d)} = \max{(0,d^3-3d^2+2)}, \ d\ge 2.
  \end{align}
  \label{theorem_hypercubic_systems_n_3}
\end{theorem}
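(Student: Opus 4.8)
The plan is to compute the relevant entanglement invariants of the state $v$ directly and show they all attain the minimum permitted values established for the $3$-qubit (and, by truncation, general) structure. Concretely, for a $3$-partite system the set $\mathcal{N}_D$ has four elements: three $2$-partite invariants $n_{D,1},n_{D,2},n_{D,3}$ (one for each bipartition $\{i\}\,|\,\{j,k\}$) and the principal invariant $N_D$. First I would observe that $v$ is manifestly symmetric under $S_3$ permuting the three tensor factors, so $n_{D,1}=n_{D,2}=n_{D,3}$ and it suffices to analyze a single bipartition, say $\{1\}\,|\,\{2,3\}$. For that bipartition, write $v$ as a linear map $M_1\colon V_1^*\to V_2\otimes V_3$ (equivalently as the $d\times d^2$ matrix of coordinates); the invariant $n_{D,1}$ is controlled by $\dim\ker M_1$ and $\dim\operatorname{im}M_1$. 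Reading off the nonzero coordinates of $v$, the image of $M_1$ is spanned by the $d$ vectors $w_l = e_{2,l}\otimes e_{3,l} + \delta_{l\ge 2}(e_{2,1}\otimes e_{3,l}+e_{2,l}\otimes e_{3,1})$ for $l=1,\dots,d$ (with the convention that the $l=1$ term contributes only $e_{2,1}\otimes e_{3,1}$), and these are visibly linearly independent because the ``diagonal'' monomials $e_{2,l}\otimes e_{3,l}$ appear in $w_l$ and in no other $w_{l'}$. Hence $M_1$ has full rank $d$, trivial kernel, and $n_{D,1}$ takes its minimal value; by symmetry the same holds for the other two bipartitions. This disposes of the first three components of $\mathcal{N}_D$.

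The substantive step is the principal invariant $N_D$, which is generated by the partition $X_3=\{J_1,J_2,J_3\}$ with $J_i=\{1,\dots,n\}\setminus\{i\}$ as in the body of the paper. I would follow the construction referenced in \cite{Buniy:2010yh,Buniy:2010zp}: $N_D$ is the dimension of a certain intersection (or, dually, a corank) of the linear maps associated to the three bipartitions acting jointly, and for $D=(d,d,d)$ the claimed minimum is $\max(0,d^3-3d^2+2)$. The strategy is to exhibit an explicit basis, or an explicit spanning set of the precise cardinality $d^3-3d^2+2$, for the relevant subspace and show it is exactly spanned — an upper bound on $N_D$ from a spanning set, and a matching lower bound from the known minimum (Conjecture~\ref{conjecture_hypercubic_systems}/the numerical data, which for $n=3$ is being promoted to a theorem here). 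Concretely: the total coordinate space has dimension $d^3$; each of the three bipartition maps, having full rank $d$ as shown above, imposes independent constraints, and one must show the constraints overlap in a controlled way so that the surviving dimension is $d^3 - 3(d^2-1) - (\text{correction})$. A cleaner route may be to use the truncation property stated just before the theorem: since truncating the $d\times d\times d$ diagram to $d'\times d'\times d'$ yields the same-shaped state for $D'=(d',d',d')$, one can set up an induction on $d$, with the base case $d=2$ being the standard $3$-qubit GHZ-type analysis (where $v$ is $S_3$-equivalent to the GHZ state and lands in $C_6$ with all invariants zero, matching $\max(0,8-12+2)=0$).

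I expect the main obstacle to be the precise bookkeeping for $N_D$: verifying that the three rank conditions interact so as to leave exactly $d^3-3d^2+2$ dimensions rather than something off by a lower-order term. The $-3d^2$ is clearly ``three bipartitions, each costing $d^2$,'' but the ``$+2$'' is delicate and reflects the nontrivial intersection pattern of the three kernels/images along the diagonal directions $e_{1,l}\otimes e_{2,l}\otimes e_{3,l}$ and the corner $e_{1,1}\otimes e_{2,1}\otimes e_{3,1}$; pinning this down requires carefully identifying which monomials are forced to vanish by more than one constraint. The secondary subtlety is simply recalling the exact definition of the map whose invariant $N_D$ is, from \cite{Buniy:2010zp}, and confirming that ``least possible value'' for that map on $D=(d,d,d)$ is indeed $\max(0,d^3-3d^2+2)$ — once the value computed for $v$ equals that minimum, Definition~\ref{definition_maximally_entangled_states_and_classes} immediately gives that $v$ is an MES, completing the proof.
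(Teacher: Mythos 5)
Your treatment of the three bipartite invariants is sound and matches the paper's: the paper shows directly that the kernel equations \eqref{k_1_equation}--\eqref{k_3_equation} force $w=0$, which is the same content as your full-rank argument for $M_1$ (though note that the row of the matrix attached to $e_{1,1}$ is the full diagonal $\sum_{m=1}^d e_{2,m}\otimes e_{3,m}$, not just $e_{2,1}\otimes e_{3,1}$; linear independence still holds, but via the off-diagonal monomials $e_{2,i}\otimes e_{3,1}$, $i\ge 2$, rather than via the diagonal monomials as you state).

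The gap is the principal invariant, which is the entire substance of the theorem and which your proposal does not actually compute. You correctly identify $N_D=\dim{K_{1,2,3}}$ as the delicate quantity and correctly guess that the answer is ``$d^3$ minus roughly three times $d^2$ plus a correction,'' but neither route you offer for pinning down the ``$+2$'' closes. Taking the matching lower bound ``from the known minimum (Conjecture~\ref{conjecture_hypercubic_systems}/the numerical data)'' is circular: that conjectured value for $n=3$ is exactly what this theorem establishes, and numerical data cannot supply a bound in a proof. The induction-via-truncation idea runs in the wrong direction: truncation takes a $d\times d\times d$ state to a $d'\times d'\times d'$ one with $d'<d$, so knowing the result for smaller $d$ gives nothing for larger $d$, and the truncation property is itself only asserted, not proved, in the text. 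What the paper actually does is solve the defining linear system for $K_{1,2,3}$ explicitly: it uses the $S_3$ symmetry to obtain \eqref{w_S_3_symmetry_1}--\eqref{w_S_3_symmetry_2}, splits the $3d^2$ equations into the trace equations \eqref{k_1_2_3_equation_4}--\eqref{k_1_2_3_equation_6} and the pointwise equations \eqref{k_1_2_3_equation_7}--\eqref{k_1_2_3_equation_9}, solves the latter, and reduces the former to a block system $AW=B$ whose coefficient matrix is built from identity and cyclic-permutation blocks; a block LU decomposition gives $\rank{A}=3(d-1)$ or $3(d-1)-1$ according to the parity of $d$, and a count of free parameters (the same for both parities) yields $d^3-(3d-2)-3d(d-1)=d^3-3d^2+2$. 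None of this bookkeeping --- in particular the parity dichotomy and the identification of exactly two identities among the $3d$ trace equations, which is where the ``$+2$'' comes from --- appears in your proposal, so as written it is a plan rather than a proof.
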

\begin{proof}
  Since the coordinates of the state $v$ are
  \begin{align}
    v_{i,j,k} =-3\delta_{i,1}\delta_{j,1}\delta_{k,1} +\delta_{i,j}\delta_{i,k}\delta_{j,k} +\delta_{i,1}\delta_{j,k} +\delta_{j,1}\delta_{i,k} +\delta_{k,1}\delta_{i,j}, \ 1\le i,j,k\le d,
    \label{}
  \end{align}
  it follows that the required equations in the definitions of $K_1$, $K_2$, $K_3$ \cite{Buniy:2010yh,Buniy:2010zp} are, respectively,
  \begin{align}
    &\sum_{i=1}^d v_{i,j,k} w_{i} =-3\delta_{j,1}\delta_{k,1}w_1 +\delta_{j,k}w_j +\delta_{j,k}w_1 +\delta_{j,1}w_k +\delta_{k,1}w_j =0, \ 1\le j,k\le d, \label{k_1_equation} \\
    &\sum_{j=1}^d v_{i,j,k} w_{j} =-3\delta_{i,1}\delta_{k,1}w_1 +\delta_{i,k}w_i +\delta_{i,k}w_1 +\delta_{i,1}w_k +\delta_{k,1}w_i =0, \ 1\le i,k\le d, \label{k_2_equation} \\
    &\sum_{k=1}^d v_{i,j,k} w_{j} =-3\delta_{i,1}\delta_{j,1}w_1 +\delta_{i,j}w_i +\delta_{i,j}w_1 +\delta_{i,1}w_j +\delta_{j,1}w_i =0, \ 1\le i,j\le d, \label{k_3_equation}
  \end{align}
  and $K_{1,2,3}$ is defined \cite{Buniy:2010yh,Buniy:2010zp} by the system of equations 
  \begin{align}
    &\sum_{j,k=1}^d v_{i,j,k} w_{i_0,j,k} =-3\delta_{i,1}w_{i_0,1,1} +\delta_{i,1}\sum_{l=1}^d w_{i_0,l,l} +w_{i_0,1,i} +w_{i_0,i,1} +w_{i_0,i,i} =0, \  1\le i,i_0\le d, \label{k_1_2_3_equation_1} \\
    &\sum_{i,k=1}^d v_{i,j,k} w_{i,j_0,k} =-3\delta_{j,1}w_{1,j_0,1} +\delta_{j,1}\sum_{l=1}^d w_{l,j_0,l} +w_{1,j_0,j} +w_{j,j_0,1} +w_{j,j_0,j} =0, \  1\le j,j_0\le d, \label{k_1_2_3_equation_2}  \\
    &\sum_{i,j=1}^d v_{i,j,k} w_{i,j,k_0} =-3\delta_{k,1}w_{1,1,k_0} +\delta_{k,1}\sum_{l=1}^d w_{l,l,k_0} +w_{1,k,k_0} +w_{k,1,k_0} +w_{k,k,k_0} =0, \  1\le k,k_0\le d. \label{k_1_2_3_equation_3} 
  \end{align}
  The algebraic invariants are $\mathcal{N}=(n_1,n_2,n_3,n_{1,2,3})$, where $n_i=\dim{K_i}$, $1\le i\le 3$ and $n_{1,2,3}=\dim{K_{1,2,3}}$.

  The space $K_1$ is the set of vectors $w\in V_1$ satisfying \eqref{k_1_equation}.
  For $j=k=1$ this equation becomes $w_1=0$, and for $j=1$, $k\ge 2$ it becomes $w_k=0$.
  All the remaining equations in \eqref{k_1_equation} are trivially satisfied by this zero solution $w_i=0$, $1\le i\le d$.
  This implies that $K_1$ contains only the zero vector and hence $n_1=0$.
  Similarly, the equations \eqref{k_2_equation} and \eqref{k_3_equation} also have only the zero solutions and we conclude that $n_2=0$ and $n_3=0$.

  We now calculate $n_{1,2,3}$.

  Let $d=2$.
  The only solution of the combined system of equations \eqref{k_1_2_3_equation_1}, \eqref{k_1_2_3_equation_2}, \eqref{k_1_2_3_equation_3} is $w_{i,j,k}=0$ for all $1\le i,j,k\le d$.
  This gives $n_{1,2,3}=0$.
  
  Let $d\ge 3$.
  Noting the special role played by the index $1$ and the $S_3$ symmetry of the state $v$, we have
  \begin{align}
  &w_{i,1,1}=w_{1,i,1}=w_{1,1,i}, \ 1\le i\le d, \label{w_S_3_symmetry_1} \\ 
  &w_{1,i,i}=w_{i,1,i}=w_{i,i,1}, \ 1\le i\le d. \label{w_S_3_symmetry_2} 
  \end{align}
  We split the equations \eqref{k_1_2_3_equation_1}, \eqref{k_1_2_3_equation_2}, \eqref{k_1_2_3_equation_3} into the following sets:
  \begin{align}
  &\sum_{l=1}^d w_{i_0,l,l}=0, \ 1\le i_0\le d, \label{k_1_2_3_equation_4} \\
  &\sum_{l=1}^d w_{l,j_0,l}=0, \ 1\le j_0\le d, \label{k_1_2_3_equation_5} \\
  &\sum_{l=1}^d w_{l,l,k_0}=0, \ 1\le k_0\le d, \label{k_1_2_3_equation_6} \\
  &w_{i_0,1,i} +w_{i_0,i,1} +w_{i_0,i,i} =0, \ 2\le i\le d, \ 1\le i_0\le d, \label{k_1_2_3_equation_7} \\
  &w_{1,j_0,j} +w_{j,j_0,1} +w_{j,j_0,j} =0, \ 2\le j\le d, \ 1\le j_0\le d, \label{k_1_2_3_equation_8}  \\
  &w_{1,k,k_0} +w_{k,1,k_0} +w_{k,k,k_0} =0, \ 2\le k\le d, \ 1\le k_0\le d. \label{k_1_2_3_equation_9} 
  \end{align}
  The solutions of \eqref{k_1_2_3_equation_7}, \eqref{k_1_2_3_equation_8}, \eqref{k_1_2_3_equation_9} are
  \begin{align}
  &w_{1,1,i}=-\tfrac{1}{2}w_{1,i,i}, \ 2\le i\le d, \label{k_1_2_3_solution} \\
  &w_{i_0,i,i}=-w_{i_0,1,i}-w_{i_0,i,1}, 2\le i,i_0\le d, \label{k_1_2_3_solution_7} \\
  &w_{j,j_0,j}=-w_{1,j_0,j}-w_{j,j_0,1}, 2\le j,j_0\le d, \label{k_1_2_3_solution_8} \\
  &w_{k,k,k_0}=-w_{1,k,k_0}-w_{k,1,k_0}, 2\le k,k_0\le d. \label{k_1_2_3_solution_9}
  \end{align}
  In view of \eqref{w_S_3_symmetry_1}, there are $3(d-1)$ solutions in \eqref{k_1_2_3_solution}.
  Taking $i=i_0$ in \eqref{k_1_2_3_solution_7}, $j=j_0$ in \eqref{k_1_2_3_solution_8}, and $k=k_0$ in \eqref{k_1_2_3_solution_9}, we have $d-1$ solutions $w_{i,i,i}=-2w_{1,i,i}$, $2\le i\le d$.
  The number of the remaining solutions in \eqref{k_1_2_3_solution_7}, \eqref{k_1_2_3_solution_8} and \eqref{k_1_2_3_solution_9} is $3(d-1)(d-2)$.
  Taking into account the $2(d-1)$ equations in \eqref{w_S_3_symmetry_2}, we find the total number of solutions
  \begin{align}
  3(d-1)+(d-1)+3(d-1)(d-2)+2(d-1)=3d(d-1),
  \end{align}
  which equals the number $3d(d-1)$ of equations in \eqref{k_1_2_3_equation_7}, \eqref{k_1_2_3_equation_8}, \eqref{k_1_2_3_equation_9}.
  
  In view of the symmetry \eqref{w_S_3_symmetry_2}, the equations \eqref{k_1_2_3_equation_4} for $i_0=1$, \eqref{k_1_2_3_equation_5} for $j_0=1$ and \eqref{k_1_2_3_equation_6} for $k_0=1$ reduce to only one equation,
  \begin{align}
  w_{1,1,1}+\sum_{l=2}^d w_{1,l,l}=0. \label{k_1_2_3_equation_10}
  \end{align}
  Using the solutions \eqref{k_1_2_3_solution}, \eqref{k_1_2_3_solution_7}, \eqref{k_1_2_3_solution_8} and \eqref{k_1_2_3_solution_9}, we write all the remaining equations in \eqref{k_1_2_3_equation_4}, \eqref{k_1_2_3_equation_5} and \eqref{k_1_2_3_equation_6} in the form
  \begin{align}
  &\tfrac{1}{2}w_{1,i_0,i_0}+\sum_{l=2}^d (w_{i_0,1,l}+w_{i_0,l,1})=0, \ 2\le i_0\le d, \label{k_1_2_3_equation_11} \\
  &\tfrac{1}{2}w_{j_0,1,j_0}+\sum_{l=2}^d (w_{1,j_0,l}+w_{l,j_0,1})=0, \ 2\le j_0\le d, \label{k_1_2_3_equation_12} \\
  &\tfrac{1}{2}w_{k_0,k_0,1}+\sum_{l=2}^d (w_{1,l,k_0}+w_{l,1,k_0})=0, \ 2\le k_0\le d. \label{k_1_2_3_equation_13}
  \end{align}
  We note that $3(d-1)$ equations \eqref{k_1_2_3_equation_11}, \eqref{k_1_2_3_equation_12} and \eqref{k_1_2_3_equation_13} contain $3(d-1)^2$ variables
  \begin{align}
  \{w_{1,k,l}\}_{2\le k,l\le d} \cup \{w_{k,1,l}\}_{2\le k,l\le d} \cup \{w_{k,l,1}\}_{2\le k,l\le d}. \label{w_variables}
  \end{align}
  Among these variables we choose $3(d-1)$ variables
  \begin{align}
  \{w_{1,2,3},w_{1,3,4},\dotsc,w_{1,d-1,d},w_{1,d,2}\} \cup \{w_{2,1,3},w_{3,1,4},\dotsc,w_{d-1,1,d},w_{d,1,2}\} \nn \\
  \cup \{w_{2,3,1},w_{3,4,1},\dotsc,w_{d-1,d,1},w_{d,2,1}\} \label{w_variables_1}
  \end{align}
  and write equations \eqref{k_1_2_3_equation_11}, \eqref{k_1_2_3_equation_12} and \eqref{k_1_2_3_equation_13} in the form $AW=B$.
  Here $W$ is a $3(d-1)\times 1$ matrix of the variables in \eqref{w_variables_1}, $B$ is a $3(d-1)\times 1$ matrix that depends on the variables in \eqref{w_variables} that are not in \eqref{w_variables_1}, and $A$ is the $3(d-1)\times 3(d-1)$ block matrix
  \begin{align}
  A=
  \begin{bmatrix}
  I & I & O \\
  O & I & I \\
  J & O & I
  \end{bmatrix}
  ,
  \end{align}
  where $O$ is the $(d-1)\times(d-1)$ zero matrix, $I$ is the $(d-1)\times(d-1)$ identity matrix and $J$ is the $(d-1)\times(d-1)$ cyclic permutation matrix
  \begin{align}
  J=
  \begin{bmatrix}
  0 & 1 & 0 & 0 & \dotsb & 0 & 0 \\
  0 & 0 & 1 & 0 & \dotsb & 0 & 0 \\
  \dotsb & \dotsb & \dotsb & \dotsb & \dotsb & \dotsb & \dotsb \\
  0 & 0 & 0 & 0 & \dotsb & 0 & 1 \\
  1 & 0 & 0 & 0 & \dotsb & 0 & 0 \\
  \end{bmatrix}
  .
  \end{align}
  (Note that we need to choose a specific ordering of elements of \eqref{w_variables_1} in $W$ to arrive at the above form of $A$.)
  
  We write the block matrix $A$ in the form of the block LU decomposition, $A=LDU$, where
  \begin{align}
  L=
  \begin{bmatrix}
  I & O & O \\
  O & I & O \\
  J & -J & I
  \end{bmatrix}
  , \ 
  D=
  \begin{bmatrix}
  I & O & O \\
  O & I & O \\
  O & O & I+J
  \end{bmatrix}
  , \ 
  U=
  \begin{bmatrix}
  I & I & O \\
  O & I & I \\
  O & O & I
  \end{bmatrix}
  .
  \end{align}
  The lower and upper block triangular matrices $L$ and $U$ are invertible and have the maximal rank $3(d-1)$, and
  \begin{align}
  \rank{D}=2\rank{I}+\rank{(I+J)}.
  \end{align}
  Using $\rank{I}=d-1$ and
  \begin{align}
  \rank{(I+J)}=
  \begin{cases}
  d-1, & d \textrm{\ is even}, \\ 
  d-2, & d \textrm{\ is odd},
  \end{cases}
  \end{align}
  we find
  \begin{align}
  \rank{A}=\rank{D}=
  \begin{cases}
  3(d-1), & d \textrm{\ is even}, \\ 
  3(d-1)-1, & d \textrm{\ is odd}.  
  \end{cases}
  \end{align}
  
  For an even $d$, the matrix $A$ is invertible and the equation $AW=B$ can be solved for $W$.
  We substitute the solution $W=A^{-1}B$ into the only remaining equation \eqref{k_1_2_3_equation_10} and solve the resulting equation for one of the remaining unknowns, for example, $w_{1,1,1}$.
  
  For an odd $d$, any given row of $A$ is linearly dependent on all the other rows there.
  To avoid this linear dependence (and the resulting non-invertibility of $A$), we remove any one equation from $AW=B$ and the corresponding variable from \eqref{w_variables_1}.
  The resulting equation $A'W'=B'$ is now solvable for $W'$ because $A'$ is invertible.
  We substitute the solution $W'=A^{\prime -1}B$ into the above removed equation and solve it for one of the variables in $\{w_{1,l,l}\}_{l=2}^d$.
  Finally, with the above solutions, we solve the only remaining equation \eqref{k_1_2_3_equation_10} for one of the remaining unknowns, for example, $w_{1,1,1}$. 
  
  As a result, we solved all equations in \eqref{k_1_2_3_equation_4}, \eqref{k_1_2_3_equation_5}, \eqref{k_1_2_3_equation_6}, \eqref{k_1_2_3_equation_7}, \eqref{k_1_2_3_equation_8}, \eqref{k_1_2_3_equation_9} for all even and odd $d$.
  We note that the difference between the even and odd cases does not effect the counting of the free parameters that the solutions depend on because in the even case these parameters include one extra parameter from the set $\{w_{1,l,l}\}_{l=2}^d$ in exchange for one extra parameter among \eqref{w_variables_1} in the odd case.
  To count the free parameters for both even and odd $d$, we have $d^3$ quantities $\{w_{i,j,k}\}_{i,j,k=1}^d$ satisfying $3d$ relations \eqref{k_1_2_3_equation_4}, \eqref{k_1_2_3_equation_5}, \eqref{k_1_2_3_equation_6} (among which there are $2$ identities resulting from taking $i_0=j_0=k_0=1$) and $3d(d-1)$ relations \eqref{k_1_2_3_equation_7}, \eqref{k_1_2_3_equation_8}, \eqref{k_1_2_3_equation_9}, which says that there are
  \begin{align}
  d^3-(3d-2)-3d(d-1)=d^3-3d^2+2
  \end{align}
  free parameters that the solutions depend on.
  This means that $n_{1,2,3}=\dim{K_{1,2,3}}=d^3-3d^2+2$. 
\end{proof}

\section{Properties of the Principal Entanglement Invariants}
\label{section_properties_of_the_principal_entanglement_invariants}

In this appendix we consider either $n=3$, $\min{(d_1,d_2,d_3)}\ge 3$ or $n\ge 4$, corresponding to \eqref{N_conjecture}.

Let $1\le i_1\le n$ and suppose that $N_{(\dotsc,d_{i_1}+\delta_{i_1},\dotsc)}>0$ for $\delta_{i_1}\in\{-1,0,1\}$, where from now on we use the short notation where we list only indices in fixed positions that may change and assume that all the remaining indices are identical in any subsequent appearances of $N_D$ in a given set of equations.
Using \eqref{N_conjecture} we find the recurrence relation
\begin{align}
    \sum_{\delta_{i_1}\in\{-1,1\}}N_{(\dotsc,d_{i_1}+\delta_{i_1},\dotsc)} =2N_{(\dotsc,d_{i_1},\dotsc)} -2.
    \label{recurrence_1}
\end{align}
Note that \eqref{recurrence_1} says that the second finite difference of $N_{(\dotsc,d_{i_1},\dotsc)}$ with respect to $d_{i_1}$ equals $-2$, which agrees with the dependence of $N_{(d_1,\dotsc,d_n)}$ on $d_i$ given in \eqref{N_dependence_on_d_i}.
If we know $N_{(\dotsc,d_{i_1}+\delta_{i_1},\dotsc)}>0$ for $\delta_{i_1}\in\{-1,0\}$ as initial conditions, then we can use \eqref{recurrence_1} to find recursively $N_{(\dotsc,d_{i_1}+j_1,\dotsc)}$ for all $j_1\ge 1$. 
We can similarly move in the direction of the decreasing values of $d_{i_1}$ and find recursively $N_{(\dotsc,d_{i_1}-1-j_1,\dotsc)}$ for all $j_1\ge 1$.
This allows us to fill out all entries in any row or column of the table of values of $N_D$ if we know the values of any two adjacent positive values in that row or column.

Let $1\le i_1<i_2\le n$ and suppose that we know $N_{(\dotsc,d_{i_1}+\delta_{i_1},\dotsc,d_{i_2}+\delta_{i_2},\dotsc)}>0$ for $\delta_{i_1},\delta_{i_2}\in\{-1,0\}$ as initial conditions.
We can use \eqref{recurrence_1} to find recursively
\begin{align}
N_{(\dotsc,d_{i_1}+j_1,\dotsc,d_{i_2}-1,\dotsc)}, \ N_{(\dotsc,d_{i_1}+j_1,\dotsc,d_{i_2},\dotsc)}, \  N_{(\dotsc,d_{i_1}-1,\dotsc,d_{i_2}+j_2,\dotsc)}, \ N_{(\dotsc,d_{i_1},\dotsc,d_{i_2}+j_2,\dotsc)}
\end{align}
for all $j_1\ge 1$ and $j_2\ge 1$.
Since now we have at least two known values of $N_D$ in each row (in the $d_{i_1}$ direction) and each column (in the $d_{i_2}$ direction), we can proceed recursively and fill entries for all values of $d_{i_1}$ and $d_{i_2}$.

More generally, let $1\le k\le n$ and $1\le i_1<\dotsb<i_k\le n$ and suppose that we know $2^k$ values  $N_{(\dotsc,d_{i_1}+\delta_{i_1},\dotsc,d_{i_k}+\delta_{i_k},\dotsc)}>0$ for $\delta_{i_1},\dotsc,\delta_{i_k}\in\{-1,0\}$ as initial conditions.
Similarly using \eqref{recurrence_1} we can find recursively the invariants $N_{(\dotsc,d_{i_1},\dotsc,d_{i_k},\dotsc)}$ for all values of $d_{i_1},\dotsc,d_{i_k}$.
The recurrence relation \eqref{recurrence_1} relates the values of $N_D$ on a line (a $1$-dimensional linear subset of $\Z^n$).
Using \eqref{recurrence_1} we can derive analogous recurrence relations that relate the values of $N_D$ on a $k$-dimensional linear subset of $\Z^n$:
\begin{align}
    \sum_{\delta_{i_1},\dotsc,\delta_{i_k}\in\{-1,1\}}N_{(\dotsc,d_{i_1}+\delta_{i_1},\dotsc,d_{i_k}+\delta_{i_k},\dotsc)} =2^k N_{(\dotsc,d_{i_1},\dotsc,d_{i_k},\dotsc)} -2^k k.
    \label{recurrence_k}
\end{align}
The recurrence relations \eqref{recurrence_k} may give a specific value of $N_D$ faster than repeated use of \eqref{recurrence_1}. 

As an example, let us use \eqref{recurrence_1} to compute several values of $N_{(3,d_2,d_3)}$ in Table \ref{table_3_d2_d3} using the initial conditions $N_{(3,3,3)}=2$, $N_{(3,3,4)}=4$, $N_{(3,4,3)}=4$, $N_{(3,4,4)}=9$:
\begin{align}
    N_{(3,3,5)} &=\max{(0,2N_{(3,3,4)}-N_{(3,3,3)}-2)} =\max{(0,2\times 4-2-2)} =4, \\
    N_{(3,3,6)} &=\max{(0,2N_{(3,3,5)}-N_{(3,3,4)}-2)} =\max{(0,2\times 4-4-2)} =2, \\
    N_{(3,3,7)} &=\max{(0,2N_{(3,3,6)}-N_{(3,3,5)}-2)} =\max{(0,2\times 2-4-2)} =0, \\
    N_{(3,4,5)} &=\max{(0,2N_{(3,4,4)}-N_{(3,4,3)}-2)} =\max{(0,2\times 9-4-2)} =12, \\
    N_{(3,4,6)} &=\max{(0,2N_{(3,4,5)}-N_{(3,4,4)}-2)} =\max{(0,2\times 12-9-2)} =13, \\
    N_{(3,4,7)} &=\max{(0,2N_{(3,4,6)}-N_{(3,4,5)}-2)} =\max{(0,2\times 13-12-2)} =12, \\
    N_{(3,4,8)} &=\max{(0,2N_{(3,4,7)}-N_{(3,4,6)}-2)} =\max{(0,2\times 12-13-2)} =9, \\
    N_{(3,4,9)} &=\max{(0,2N_{(3,4,8)}-N_{(3,4,7)}-2)} =\max{(0,2\times 9-12-2)} =4, \\
    N_{(3,4,10)} &=\max{(0,2N_{(3,4,9)}-N_{(3,4,8)}-2)} =\max{(0,2\times 4-9-2)} =0.
\end{align}
Using \eqref{recurrence_k} for $k=2$, we find
\begin{align}
    N_{(3,5,5)} =\max{(0,4N_{(3,4,4)} -N_{(3,3,3)} -N_{(3,3,5)} -N_{(3,5,3)} -8)} =\max{(0,4\times 9 -2 -4 -4 -8)} =18,
\end{align}
where we used the permutation symmetry $N_{(3,5,3)}=N_{(3,3,5)}$.

Let $1\le i\le n$ and consider how $N_{(d_1,\dotsc,d_n)}$ depends on $d_i$ for fixed values of $d_1,\dotsc,d_{i-1},d_{i+1},\dotsc,d_n$.
Assuming $N_{(d_1,\dotsc,d_n)}>0$, we have a quadratic polynomial in $d_i$,
\begin{align}
    N_{(d_1,\dotsc,d_n)} =-d_i^2 +d_i P_i -S_i +n-1,
    \label{N_dependence_on_d_i}
\end{align}
where
\begin{align}
    P_i=\prod_{\begin{subarray}{c} 1\le j\le n \\ j\not=i \end{subarray}}d_j, \  S_i=\sum_{\begin{subarray}{c} 1\le j\le n \\ j\not=i \end{subarray}} d_j^2.
\end{align}
Let us choose an integer $j_i\ge 0$ and write \eqref{N_dependence_on_d_i} in the form
\begin{align}
    N_{(d_1,\dotsc,d_n)} =d_i(-d_i +P_i-j_i) +d_i j_i -S_i +n-1.    
\end{align}
If now $d_i j_i -S_i +n-1\le 0$, we can conclude that
\begin{align}
    N_{(d_1,\dotsc,d_n)}=0, \ d_i\ge P_i-j_i.
    \label{N_equals_0}
\end{align}
The inequality $d_i j_i -S_i +n-1\le 0$ holds for $j_i=0$ (since $d_j\ge 1$ for all $1\le j\le n$ gives $S_i\ge n-1$), which implies that $N_{d_1,\dotsc,d_n}=0$ if at least one of the dimensions $d_i$ is at least as large as the product of all other dimensions $d_1,\dotsc,d_{i-1},d_{i+1},\dotsc,d_n$.
In some cases the stronger inequality \eqref{N_equals_0} with $j_i>0$ holds, but this requires additional constraints.
For example, for $n=4$ and $i=4$, the choice $j_i=1$ works for the cases $(d_1,d_2,d_3)\in\{(2,2,2),(2,2,3),(2,2,4),(2,3,3)\}$ investigated in Section \ref{section_examples_of_maximally_entangled_states} and the additional cases $(d_1,d_2,d_3)\in\{(2,3,4),(2,4,4)\}$, but it fails for all cases with $\min{(d_1,d_2,d_3)}\ge 3$. 

We also conclude from \eqref{N_conjecture} and \eqref{N_dependence_on_d_i} that for each $1\le i\le n$,
\begin{align}
    \max_{d_i}{N_{(d_1,\dotsc,d_n)}} &=
    \begin{cases}
        N_{(d_1,\dotsc,d_n)}\bigr\vert_{d_i=\frac{1}{2}P_i}, & P_i \textrm{\ is even}, \\
        N_{(d_1,\dotsc,d_n)}\bigr\vert_{d_i=\frac{1}{2}(P_i-1)} =N_{(d_1,\dotsc,d_n)}\bigr\vert_{d_i=\frac{1}{2}(P_i+1)}, & P_i \textrm{\ is odd}, \\
    \end{cases}
    \nn \\
        &=
    \begin{cases}
        \frac{1}{4}P_i^2 -S_i +n-1, & P_i \textrm{\ is even}, \\
        \frac{1}{4}(P_i^2-1) -S_i +n-1, & P_i \textrm{\ is odd}.
    \end{cases}
    \label{max_N}
\end{align}
All results in Section \ref{section_examples_of_maximally_entangled_states} agree with \eqref{N_conjecture_2_d2_d3}, \eqref{N_conjecture}, \eqref{recurrence_1}, \eqref{recurrence_k}, \eqref{N_dependence_on_d_i}, \eqref{N_equals_0} (for $j_i=1$) and \eqref{max_N}.

\section{Markov Chains}
\label{section_markov_chains}

It is convenient to represent the results of RWs in terms of the corresponding stochastic matrices.
For a set of classes $C_0,\dotsc,C_m$, the matrix element $P_{i,j}$ of the corresponding $(m+1)\times(m+1)$ stochastic matrix $P$ has the meaning of the transition probability from $C_j$ to $C_i$ in one time step.
The matrix $P$ satisfies $0\le P_{i,j}\le 1$, $0\le i,j\le m$ and the normalization condition for probabilities $\sum_{i=0}^m P_{i,j}=1$, $0\le j\le m$.
For each $k\in\N$, the matrix $P^k$ gives the transition probabilities after $k$ time steps, and it satisfies $0\le (P^k)_{i,j}\le 1$, $0\le i,j\le m$ and the normalization condition
\begin{align}
\sum_{i=0}^m (P^k)_{i,j}=1, \ k\in\N, \ 0\le j\le m.
\label{P_k_normalization}
\end{align}
The inductive proof of \eqref{P_k_normalization} uses $\sum_{i=0}^m P_{i,j}=1$ as the initial step and
\begin{align}
\sum_{i=0}^m (P^{k+1})_{i,j}=\sum_{i=0}^m \sum_{l=0}^m (P^k)_{i,l}P_{l,j}=\sum_{l=0}^m P_{l,j}=1
\end{align}
as the inductive step.

We assume that the Markov chain of transitions between the classes $C_0,\dotsc,C_m$ is an absorbing Markov chain with $C_m$ being the only absorbing class.
This means that the only transitions from $C_m$ are to $C_m$, and that each class among $C_0,\dotsc,C_{m-1}$ is a transient class, so that each $C_j$, $0\le j\le m-1$ can reach $C_m$ in a finite number of steps.
In terms of the stochastic matrix, this means that $P_{i,m}=\delta_{i,m}$ for each $0\le i\le m$, and that for each $0\le j\le m$ there exists the smallest $k_j\in\N_0$ such that $(P^{k_j})_{m,j}>0$.
The integer $k_j$ equals the smallest number of steps needed to go from the class $C_j$ to the class $C_m$.
(If such a $k_j$ does not exist, then $C_j$ is not a transient class, which is a contradiction.)
Clearly, $k_m=0$.

Let us consider the expected value of the number of steps $Q_j$ that it takes for RWs to transition from $C_j$ to $C_m$.
One expression for it is
\begin{align}
Q_j =\sum_{k=1}^\infty k (P^k - P^{k-1})_{m,j},
\label{Q_j_form_1}
\end{align}
where we set $P^0=I$ (the identity matrix) even if $P$ is singular.
Here $(P^k - P^{k-1})_{m,j}$ is the probability of getting from $C_j$ to $C_m$ in $k$ steps without getting there in $k-1$ steps, that is the transition takes exactly $k$ steps.
We can also compute $Q_j$ as the sum of probabilities for the RW to visit all states except for $C_m$ after all possible numbers of steps, 
\begin{align}
Q_j =\sum_{l=0}^{m-1}\sum_{k=0}^\infty (P^k)_{l,j}.
\label{Q_j_form_2}
\end{align}
The forms \eqref{Q_j_form_1} and \eqref{Q_j_form_2} are equivalent in view of
\begin{align}
\sum_{k=1}^\infty k (P^k - P^{k-1})_{m,j} &=\sum_{k=1}^\infty k\biggl( 1-\sum_{l=0}^{m-1}(P^k)_{l,j} -1+\sum_{l=0}^{m-1}(P^{k-1})_{l,j} \biggr) \nn \\
&=\sum_{l=0}^{m-1}\sum_{k=1}^\infty k(-P^k+P^{k-1})_{l,j} =\sum_{l=0}^{m-1}\sum_{k=0}^\infty (P^k)_{l,j},
\label{}
\end{align}
where we used \eqref{P_k_normalization} in the first equality.

Recognizing the special role played by the absorbing class $C_m$, we write $P$ as the block matrix
\begin{align}
P=
\begin{pmatrix}
\tilde{P} & 0 \\
X & 1
\end{pmatrix}
,
\end{align}
where $\tilde{P}$ is the $m\times m$ stochastic matrix for the classes $\{C_j\}_{j=0}^{m-1}$, $X$ is the $1\times m$ matrix corresponding to transitions probabilities from $\{C_j\}_{j=0}^{m-1}$ to $C_m$, and $0$ is the $m\times 1$ matrix of zeroes.
Using \eqref{Q_j_form_1} and the identity
\begin{align}
P^k=
\begin{pmatrix}
\tilde{P}^k & 0 \\
X \sum_{l=0}^{k-1}\tilde{P}^l & 1
\end{pmatrix}
,
\end{align}
we find
\begin{align}
Q_j &=\sum_{l=0}^{m-1}\sum_{k=0}^\infty (\tilde{P}^k)_{l,j} =\sum_{l=0}^{m-1}((I-\tilde{P})^{-1})_{l,j}, \ 0\le j\le m-1, \label{Q_j_tilde_P} \\
Q_m &=0.
\label{Q_m_tilde_P}
\end{align}
Note that the matrix $(I-P)^{-1}$ does not exists because $P$ has an eigenvalue $1$ as a result of $C_m$ being the absorbing class. 

From the data in Figure \eqref{figure_traffic_pattern_tripartite} we find
\begin{align}
P=
\begin{pmatrix}
0 & 0 & 0 & 0 & 0 & 0 & 0 \\
1 & 0.30 & 0 & 0 & 0 & 0 & 0 \\
0 & 0.17 & 0.24 & 0 & 0 & 0 & 0 \\
0 & 0.17 & 0 & 0.24 & 0 & 0 & 0 \\
0 & 0.17 & 0 & 0 & 0.24 & 0 & 0 \\
0 & 0 & 0.24 & 0.24 & 0.24 & 0.11 & 0 \\
0 & 0.20 & 0.52 & 0.52 & 0.52 & 0.89 & 1
\end{pmatrix}
, \  Q=
\begin{pmatrix}
3.63 \\
2.63 \\
1.68 \\
1.68 \\
1.68 \\
1.12 \\
0
\end{pmatrix}
, \ (I-\tilde{P})^{-1}=
\begin{pmatrix}
1 & 0 & 0 & 0 & 0 & 0 \\
1.43 & 1.43 & 0 & 0 & 0 & 0 \\
0.31 & 0.31 & 1.32 & 0 & 0 & 0 \\
0.31 & 0.31 & 0 & 1.32 & 0 & 0 \\
0.31 & 0.31 & 0 & 0 & 1.32 & 0 \\
0.26 & 0.26 & 0.36 & 0.36 & 0.36 & 1.13
\end{pmatrix}
.
\label{P_Q_2_2_2}
\end{align}
As in Figure \ref{figure_traffic_pattern_tripartite}, relations between elements of $P$, $Q$ and $(I-\tilde{P})^{-1}$ in \eqref{P_Q_2_2_2} hold exactly up to small variations due to rounding of numbers.
For each $0\le j\le m-1$, the number $Q_j$ coincides with the sum of the matrix elements of $(I-\tilde{P})^{-1}$ in the column $j$, in agreement with \eqref{Q_j_tilde_P}.
We check numerically that
\begin{align}
P^\infty=\lim_{k\to\infty} P^k =
\begin{pmatrix}
O & 0 \\
U & 1
\end{pmatrix}
,
\end{align}
where $O$ is the $m\times m $ matrix of zeroes and $U$ is the $1\times m$ matrix of ones, and $0$ is the $m\times 1$ matrix of zeroes.
The number $Q_0=3.63$ agrees with the corresponding value $3.6$ in Table \ref{table_MEC_properties_ddd}.

\begin{acknowledgments}
We thank Heinrich P\"as for a discussion about the early work of Zeh and others on decoherence.
The work of  TWK was supported by US DOE grant DE-SC0019235.
\end{acknowledgments}




\begin{thebibliography}{100}
\bibitem{EPR} A. Einstein, B. Podolsky and N. Rosen,
Phys. Rev., \textbf{47}, 777 (1935).

\bibitem{Zeh} 
H. D. Zeh, 
Found. Phys., \textbf{1}, 69, (1970).

\bibitem{Zurek:1981xq} 
W.~H.~Zurek,
Phys. Rev. D, \textbf{24}, 1516 (1981).
  
\bibitem{Zurek:1982ii} 
W.~H.~Zurek,
Phys. Rev. D, \textbf{26}, 1862 (1982).

\bibitem{Zurek:2003zz} 
W.~H.~Zurek,
Rev. Mod. Phys., \textbf{75}, 715 (2003).
  
\bibitem{Schlosshauer:2003zy} 
M.~Schlosshauer,
Rev. Mod. Phys., \textbf{76}, 1267 (2004).

\bibitem{Joos}
E.~Joos, H.D. Zeh, C.~Kiefer, D.J.W. Giulini, J.~Kupsch, and I.O. Stamatescu.
\newblock {\em Decoherence and the Appearance of a Classical World in Quantum
  Theory}.
\newblock Springer, Berlin, Heidelberg, 2013.

\bibitem{Schlosshauer}
M.A. Schlosshauer.
\newblock {\em Decoherence: and the Quantum-To-Classical Transition}.
\newblock The Frontiers Collection. Springer, Berlin, Heidelberg, 2010.

\bibitem{Bengtsson}
I.~Bengtsson and K.~{\.Z}yczkowski.
\newblock {\em Geometry of Quantum States: An Introduction to Quantum
  Entanglement}.
\newblock Cambridge University Press, 2017. 
 
\bibitem{Bruzewicz} 
C. D. Bruzewicz, J. Chiaverini, R. McConnell, and J. M. Sage, Appl. Phys. Rev., \textbf{6}, 021314 (2019).

\bibitem{Arute} 
F. Arute, K. Arya, et al.,
Nature, \textbf{574}, 505 (2019).

\bibitem{Zhong} 
H.-S. Zhong et al.,
Phys. Rev. Lett., \textbf{127}, 180502 (2021).

\bibitem{Wu} 
Y. Wu et al.,
Phys. Rev. Lett., \textbf{127}, 180501 (2021).

\bibitem{Ebadi} 
S. Ebadi, et al.,
Nature,  \textbf{595}, 227 (2021).
  
\bibitem{Linnaeus}
C.~Linnaeus.
\newblock {\em Systema Naturae
}.
\newblock Holmiae: Salvii, 10th edition, 1758-1759.

\bibitem{Dur}
W. D\"{u}r, G. Vidal, and J. Cirac,
Phys. Rev. A, \textbf{62}, 062314 (2000).

\bibitem{Verstraete}
F. Verstraete, J. Dehaene, B. De Moor, and H. Verschelde,
Phys. Rev. A, \textbf{65}, 052112 (2002).

\bibitem{Klyachko}
A. Klyachko,
arXiv:quant-ph/0206012v1 (2002).

\bibitem{Miyake1}
A.~Miyake,
Phys.\ Rev. A, \textbf{67}, 012108 (2003).

\bibitem{Luque1}
J.-G. Luque, J.-Y. Thibon,
Phys. Rev. A, \textbf{67}, 042303 (2003).

\bibitem{Miyake2}
A.~Miyake,
Int. J. Quant. Info., \textbf{2}, 65 (2004).

\bibitem{Briand}
E. Briand, J.-G. Luque, and J.-Y. Thibon,
J. Math. Phys., \textbf{45}, 4855 (2004).

\bibitem{Luque2}
J-G. Luque and J.-Y. Thibon,
J. Phys. A Math. Gen., \textbf{39}, 371 (2005).

\bibitem{Toumazet}
F. Toumazet, J.-G. Luque, J.-Y. Thibon,
arXiv:quant-ph/0604202v1 (2006).

\bibitem{Wallach}
N. R. Wallach, ``Lectures on quantum computing,''
Venice, C.I.M.E. (2004), \\
http://www.math.ucsd.edu/~nwallach/venice.pdf.

\bibitem{Lamata}
L. Lamata, J. Leon, D. Salgado, E. Solano,
Phys. Rev. A, \textbf{75}, 022318 (2007).
  
\bibitem{Buniy:2010yh} 
R.~V.~Buniy and T.~W.~Kephart,
J.\ Phys.\ A, \textbf{45}, 182001 (2012).
  
\bibitem{Buniy:2010zp} 
R.~V.~Buniy and T.~W.~Kephart,
J.\ Phys.\ A, \textbf{45}, 185304 (2012).
  
\bibitem{Gelfand}
I.M. Gelfand, M.~Kapranov, and A.~Zelevinsky.
\newblock {\em Discriminants, Resultants, and Multidimensional Determinants}.
\newblock Modern Birkh{\"a}user Classics. Birkh{\"a}user Boston, 2009.
  
\bibitem{Olver}
P.J. Olver.
\newblock {\em Classical Invariant Theory}.
\newblock Cambridge University Press, 1999.  
  
\bibitem{OEIS} See the closely related sequence A007363, ``Maximal self-dual antichains on $n$ points,'' in  
The On-Line Encyclopedia of Integer Sequences, https://oeis.org/A007363.

\bibitem{BKx}
R. V. Buniy and T. W. Kephart, 
in preparation.
   
\bibitem{Jacob:2021}
S.~L.~Jacob, M.~Esposito, J.~ M.~R.~Parrondo,  and F.~Barra,
PRX Quantum, \textbf{2}, 020312 (2021).
  
\bibitem{KempeQRW}
J. Kempe, 
Contemporary Physics, \textbf{44}, 307 (2003).

\bibitem{VenegasQW}
S. E. Venegas-Andraca, 
Quantum Information Processing, \textbf{11}, 1015 (2012).
  
\bibitem{ALP01}
A. Ac\'in, J. I. Latorre, and P. Pascual, 
Phys. Rev. A, \textbf{63}, 042107 (2001).
  
\bibitem{Bertlmann06}
R. A. Bertlmann and B. C. Hiesmayr, 
Quantum Information Processing, {\bf 5}, 421 (2006).

\bibitem{Bramon07}
A. Bramon, G. Garbarino, and B. C. Hiesmayr, 
arXiv:0703.152 [hep-ph] (2007).

\bibitem{Bernabeu11}
J. Bernabeu, 
J. Phys.: Conf. Ser., \textbf{335}, 012011 (2011).

\bibitem{Banerjee15}
S. Banerjee, A. K. Alok, R. Srikanth, and B. C. Hiesmayr, 
Eur. Phys. J. C, \textbf{75}, 487 (2015).

\bibitem{Peschanski16}
R. Peschanski and S. Seki, 
Phys. Lett. B, \textbf{758}, 89 (2016).

\bibitem{Kharzeev17}
D. E. Kharzeev and E. M. Levin, 
Phys. Rev. D, \textbf{95}, 114008 (2017).

\bibitem{Cervera-Lierta:2017tdt}
A.~Cervera-Lierta, J.~I.~Latorre, J.~Rojo and L.~Rottoli,
SciPost Phys., \textbf{3}, 036 (2017).

\bibitem{CerveraLierta:2019ejf}
A.~Cervera Lierta,
[arXiv:1906.12099 [quant-ph]] (2019).

\bibitem{Benatti97}
F. Benatti and R. Floreanini, 
arXiv: 9712274 [hep-ph] (1997).

\bibitem{Bertlmann01}
R. A. Bertlmann and B. C. Hiesmayr, 
Phys. Rev. A, \textbf{63}, 062112 (2001).

\bibitem{Bertlmann01bis}
R. Bertlmann, W. Grimus, and B. Hiesmayr, 
Phys. Lett. A, \textbf{289}, 21 (2001).

\bibitem{Maldacena15}
J. Maldacena, 
Fortsch. der Physik, \textbf{64}, 10 (2015).

\bibitem{Berera2020}
A. Berera, Phys. Rev. D, \textbf{102}, 063005 (2020).

\bibitem{Berera2021xqa}
A.~Berera, S.~Brahma, R.~Brandenberger, J.~Calder\'on-Figueroa and A.~Heavens,
Phys. Rev. D, \textbf{104}, 063519 (2021).
    
\bibitem{HC13}
W. Helwig and W. Cui, 
arXiv:1306.2536 [quant-ph] (2013).

\bibitem{Goyeneche14}
D. Goyeneche and K. \.{Z}yczkowski, 
Phys. Rev. A, \textbf{90}, 022316 (2014).

\bibitem{Seveso18}
L. Seveso, D. Goyeneche, and K. \.{Z}yczkowski, 
J. Math. Phys., \textbf{59}, 072203 (2018).

\bibitem{Goyeneche18}
D. Goyeneche, Z. Raissi, S. Di Martino, and K. \.{Z}yczkowski, 
Phys. Rev. A, \textbf{97}, 062326 (2018).
 
\bibitem{Arnaud13}
L. Arnaud and N. J. Cerf, 
Phys. Rev. A, \textbf{87}, 012319 (2013).

\bibitem{Brown05}
I. D. K. Brown, S. Stepney, A. Sudbery, and S. L. Braunstein, 
J. Phys. A, \textbf{38}, 1119 (2005).

\bibitem{Facchi09}
P. Facchi, 
Atti Accad. Naz. Lincei Cl. Sci. Fis. Mat. Natur., \textbf{20}, 25 (2009).

\bibitem{Vicente13}
J. I. de Vicente, C. Spee, and B. Kraus, 
Phys. Rev. Lett., \textbf{111}, 110502 (2013).

\bibitem{Cervera-Lierta:2018inm}
A.~Cervera-Lierta, A.~Gasull, J.~I.~Latorre and G.~Sierra,
J. Phys. A, \textbf{51}, 505301 (2018).

\bibitem{Cervera-Lierta:2019}
A. Cervera-Lierta,  J. I.  Latorre,  and D. Goyeneche,
Phys. Rev. A, \textbf{100}, 0223342 (2019).

\bibitem{Kuwahara:2021mkh}
T.~Kuwahara and K.~Saito,
Phys. Rev. X, \textbf{12}, 021022 (2022).

\end{thebibliography}
\end{document}